\def\R{\mathbb{R}}
\newcommand{\pd}[2]{\dfrac{\partial #1}{\partial #2}}
\newcommand{\smallpd}[2]{{\partial #1}/{\partial #2}}
\newcommand{\deltapd}[2]{\dfrac{\delta #1}{\delta #2}}
\newcommand{\der}[2]{\dfrac{\mathrm{d} #1}{\mathrm{d} #2}}
\newcommand{\bra}[1]{\langle #1 |}
\newcommand{\ket}[1]{| #1 \rangle}
\newtheorem{theorem}{Theorem}
\newtheorem{proposition}{Proposition}
\newtheorem{remark}{Remark}
\def\F{\mathcal{F}}     
\def\J{\mathcal{J}}     
\def\L{\mathcal{L}}     
\def\M{\mathcal{M}}     
\def\r{\mathbf{r}}      
\def\I{\mathbb{I}}      
\def\d{\mathrm{d}}      
\title{\textbf{GRAPE optimization for open quantum systems with time-dependent decoherence rates driven by coherent and incoherent controls}}
\author[*]{Vadim N. Petruhanov}
\author[**]{Alexander N. Pechen}
\affil[$\,$]{\it Department of Mathematical Methods for Quantum Technologies,\par Steklov Mathematical Institute of Russian Academy of Sciences,\par Moscow, Russia;}
\affil[*]{vadim.petrukhanov@gmail.com}
\affil[**]{apechen@gmail.com}
\begin{document}

\maketitle
Keywords: quantum control, coherent control, incoherent control, open quantum systems, gradient optimization, time-dependent decoherence rates

\begin{abstract}
The GRadient Ascent Pulse Engineering (GRAPE) method is widely used for optimization in quantum control. GRAPE is gradient search method based on exact expressions for gradient of the control objective. It has been applied to coherently controlled closed and open quantum systems. In this work, we adopt GRAPE method for optimizing objective functionals for open quantum systems driven by both coherent and incoherent controls. In our case, the tailored or engineered environment acts on the system as control via {\it time-dependent decoherence rates} $\gamma_k(t)$ or, equivalently, via {\it spectral density of the environment} $n_\omega(t)$. To develop GRAPE approach for this problem, we compute gradient of various objectives for general N-level open quantum systems both for piecewise class of control. The case of a single qubit is considered in details and solved analytically. For this case, an explicit analytical expression for evolution and objective gradient is obtained via diagonalization of a $3\times 3$ matrix determining the system's dynamics in the Bloch ball. The diagonalization is obtained by solving a cubic equation via Cardano's method.  The efficiency of the algorithm is demonstrated through numerical simulations for the state-to-state transition problem and its complexity is estimated.
\end{abstract}

\section{Introduction}\label{Sec0}
Control of individual quantum systems is an important tool necessary for development of modern quantum technologies~\cite{KochEPJQuantumTechnol2022,ShapiroBrumerBook2003,TannorBook2007,LetokhovBook2007,FradkovBook2007,BrifChakrabartiRabitz2010,WisemanMilburnBook2010,DongPetersen2010,Moore2011,CongBook2014,CPKoch2016OpenQS,DAlessandroBook2021}. In experimental circumstances, often controlled systems can not be isolated from the environment, so that they are open quantum systems. The presence of the environment is often considered as an obstacle for controlling the system. In other cases, the environment can be used as a useful resource for active controlling quantum systems via its temperature, pressure, or more generally, non-equilibrium spectral density, as for example, via incoherent control~\cite{Pechen_Rabitz_2006,PechenPRA2011} or in other approaches with using an engineered environment.  By incoherent control we mean control by (generally) time-dependent decoherence rates.

The approach for controlling open quantum systems via \textit{time-dependent decoherence rates} $\gamma_k(t)$ (or, equivalently, via \textit{time-dependent spectral density of the environment} $n(t)$) jointly with standard coherent control, with master equation 
\begin{equation}\label{eq:ME}
\frac{d\rho_t}{dt}=-i[H_0+H_c(t),\rho_t]+\sum\limits_k \gamma_k(t) {\cal D}_k,
\end{equation}
where ${\cal D}_k$ is a Gorini–Kossakowski–Sudarshan–Lindblad (GKSL) dissipator and $H_c(t)$ is coherent control Hamiltonian, was proposed in~\cite{Pechen_Rabitz_2006}, where in addition to general consideration two physical classes of the environment were exploited --- incoherent photons and quantum gas, with two explicit forms of ${\cal D}_k$ derived in the weak coupling and low density limits, respectively. In the present work, we derive gradient-based optimization scheme for finding optimal coherent controls and $\gamma_k(t)$ (expressed via $n(t)$) for various control objectives. In~\cite{PechenPRA2011}, it was shown that for the master equation with ${\cal D}_k$ derived in the weak coupling limit (describing atom interacting with photons) generic $N$-level quantum systems become approximately completely controllable in the set of density matrices. Exact description of reachable sets for a qubit via geometric control tehcnique was obtained in~\cite{Lokutsievskiy_2021}.  The~first experimental realization of Kraus maps studied in~\cite{Wu_2007_5681} for an open single qubit was done in~\cite{Zhang_Saripalli_Leamer_Glasser_Bondar_2022}.

Coherent control of closed systems induces  unitary dynamics and steers one pure state to another pure state, whereas incoherent control allows to produce mixed states from pure states. Incoherent control and engineered environment can be used, for example, for creation of mixed quantum states in some models of quantum computation with mixed states and non-unitary gates~\cite{Aharonov, Tarasov2002, Verstraete2009, Schulte-Herbruggen}, in nuclear magnetic resonance~\cite{LapertPRA2013}, steady quantum states generation \cite{science2015}, improving quantum state engineering and computation~\cite{Verstraete2009}, cooling of translational motion without reliance on internal degrees of freedom~\cite{Calarco2011}, preparing entangled states~\cite{Diehl_nature_2008,Weimer_nature_2010}, inducing multiparticle entanglement dynamics~\cite{Barreiro_nature_2010}, making robust quantum memories~\cite{Pastawski_2011}, creating entanglement dynamics in two-qubit systems~\cite{Ai_Fan_Jin_Cheng_2014}, and in other applications. 

Sometimes a solution for the optimal shape of the control can be obtained analytically. However, generally it is not the case and various numerical optimization methods are used, including GRadient Ascent Pulse Engineering (GRAPE) numerical optimization algorithm~\cite{khaneja_optimal_2005}, gradient flows~\cite{Glaser2010}, Krotov method~\cite{Tannor1992,Morzhin2019}, genetic algorithms for coherent control of closed systems~\cite{Judson1992} and incoherent control of open quantum systems in~\cite{Pechen_Rabitz_2006}, gradient free CRAB optimization algorithm ~\cite{CanevaPRA2011}, Hessian based optimization as in the Broyden–Fletcher–Goldfarb–Shanno (BFGS) algorithm and combined approaches ~\cite{EitanPRA2011,DalgaardPRA2020}, machine learning such as quantum reinforcement learning with incoherent control~\cite{Dong_Chen_Tarn_Pechen_Rabitz_2008}, deep reinforcement learning~\cite{Niu_Boixo_Smelyanskiy_Neven_2019}, autoencoders~\cite{Zauleck_VivieRiedle_2018}, speed gradient algorithm~\cite{PechenBorisenokFradkov2022}, Lyapunov control~\cite{Hou_Yi_2020} and various schemes~\cite{Palittapongarnpim_Sanders_NaturePhysics_2018,Torlai_Mazzola_Carrasquilla_et_al_NaturePhysics_2018,Gao_Qiao_Jiao_Ma_Hu_et_al_2018,Luchnikov_Vintskevich_Grigoriev_Filippov_2020,Morzhin_Pechen_LJM2021}. Monotonically convergent optimization in quantum control using Krotov's method was obtained for a large class of quantum control problems~\cite{Reich2012}. A general framework for sequential update algorithms with prescriptions for efficient update rules with inexpensive dynamic search length control, taking into account discretization effects and eliminating the need for ad-hoc penalty terms was proposed~\cite{Schirmer2011}. Comparizon, optimization and benchmarking of quantum control algorithms such as GRAPE methods with concurrent update of controls and Krotov-type methods with sequential update in a unifying programming framework was performed~\cite{Machnes2011}. Gradient and Hessian based algorithms such as GRAPE and BFGS were developed and widely applied for coherent control of closed and open quantum systems beyond initial proposal for design of NMR pulse sequences~\cite{khaneja_optimal_2005}. For example, a gradient-based optimal control algorithm was used to calculate the phase of BEC of ultra-cold $^87$Rb atoms in a one-dimensional optical lattice which optimally modifies the optical lattice and brings the quantum system to the desired target state~\cite{DupontPRX2021}. Combining gradient-based methods of quantum optimal control with automatic differentiation was performed~\cite{Goerz2022}. GRAPE with feedback was studied~\cite{Porotti2022}. Optimization of controlled-Z gate with GRAPE for superconducting qubits was performed~\cite{ZongPRApplied2021}. Higher-order Trotter decomposition for improving GRAPE was proposed~\cite{Yang2022}. Gradient-based optimization with quantum trajectories and automation differentiation was studied~\cite{PhysRevA.99.052327}.
Quantum control landscape analysis for single qubit phase-shift quantum gates using GRAPE  was recently performed~\cite{Volkov_2021_215303}.

A general method of incoherent control using this spectral density of the environment, including also in combination with coherent control, either subsequent or simultaneous, was developed and studied for multi-level quantum systems in~\cite{Pechen_Rabitz_2006}. In incoherent control, spectral density of the environment, which defines the distribution of its particles in momenta and internal degrees of freedom, is a control used to manipulate the system. This spectral density can sometimes be considered as  thermal (i.e., Planck distribution), but in general it can be any non-equilibrium non-negative function of momenta and internal degrees of freedom of environmental particles, which can even depend on time. It should be non-negative because its physical meaning is the density of particles. A natural example of such environment is incoherent photons interacting with a quantum system. In this case, the incoherent control is a generally non-equilibrium distribution functions $n_{\omega,\alpha}(t)$, where $\omega$ is photon frequency and $\alpha$ is polarization. Such incoherent control by time-dependent temperature can be realized experimentally. For example, implementing fast and controlled temperature variations for non-equilibrium control of thermal and mechanical changes in a levitated system is provided in~\cite{Rademacher_2022}. Another approach to incoherent control is to use back-action of non-selective quantum measurements to manipulate the quantum system, as was proposed~\cite{PechenIlinShuangRabitz2006} and studied, e.g., in~\cite{ShuangPechenHoRabitz2007,PechenTrushechkin2015}.

An advantage of incoherent control is that it allows, in combination with coherent control, to approximately steer {\it any} initial density matrix to {\it any} given target density matrix~\cite{PechenPRA2011}. This property approximately realizes controllability of open quantum systems in the set of all density matrices --- the strongest possible degree of quantum state control. This result has several important features. (1) It was obtained with a physical class of GKSL master equations well known in quantum optics and derived in the weak coupling limit. (2) It was obtained for almost all values of the physical parameters of this class of master equations and for multi-level quantum systems of an arbitrary dimension. (3) For incoherent controls in this scheme in some simple version an explicit analytic solution (not numerical) was obtained. (4) The scheme is robust to variations of the initial state --- the optimal control steers simultaneously {\it all} initial states into the target state, thereby physically realizing all-to-one Kraus maps previously theoretically exploited for quantum control~\cite{Wu_2007_5681}. However, after the existence of optimal controls is established, the next question is how to find optimal coherent and incoherent controls. This paper is devoted to answering this question.

For closed $N$--level quantum systems, their dynamics is represented by points on a (special) unitary group $SU(N)$. For open quantum systems, a formulation of completely positive trace preserving dynamics as points of \textit{complex Stiefel manifold} was proposed and theory of open system's quantum control as gradient flow optimization over suitable complex Stiefel manifolds (strictly speaking, a factor of Stiefel manifolds over some set of transformations) was developed in details for two-level~\cite{PechenJPA2008} and general quantum systems~\cite{OzaJPA2009} and applied for optimization of various physical quantities. This picture, where the dynamics is represented by points on the unitary group for closed systems, or on the Stiefel manifold for open systems, is called kinematic picture for quantum control. A unified framework and kinematic analysis for {\it open and closed loop quantum control} was provided in~\cite{PechenBrifPRA2010} and for {\it open classical and quantum systems} in~\cite{PechenRabitzEPL2010}. The picture, where the dynamics is described by Schr\"odinger or master equation, is called dynamic picture. 

For open systems with incoherent control, gradient- and Hessian-based optimization theory was developed only in the kinematic picture~\cite{PechenJPA2008,OzaJPA2009,PechenBrifPRA2010,PechenRabitzEPL2010}. In this work, we develop the GRAPE method for open quantum systems driven by both coherent and incoherent controls in the dynamic picture. The Mayer-type control problems are considered, where the objective functional depends on the terminal density matrix of the quantum system. An explicit expression for gradient of the objective functional is obtained for piecewise constant control. This computed analytical expression is then implemented for numerical optimization for the problem of steering a given initial density matrix of a two-level quantum system to a target final density matrix. For this case, surprisingly, an explicit, complicated by completely analytical, expression for gradient is obtained through diagonalization of a $3\times 3$ matrix determining the controlled system dynamics in the Bloch representation. For that we solved characteristic equation which is third order algebraic equation and solved by Cardano's formula~\cite{Cardano_formula}. This gives us exact expression for evolution of the system, otherwise we would have to use numeric methods. The efficiency of the algorithm is demonstrated through numerical simulations in the context of the state-to-state transition problem. The results also confirm recent finding of unreachable states in the Bloch ball of all states for a two-level system~\cite{Lokutsievskiy_2021}. Important is that the obtained analytical expressions for the gradient can be used in any method which relies on gradient (including for Hessian approximation), such as GRAPE and BFGS algorithms. Numerical simulations using expressions for the gradient and the evolution for one qubit were performed in~\cite{PetruhanovPhotonics2022} for the unitary gate generation problem, but without analytical analysis as it is done in this paper. Some problems for the two-qubit case were analyzed numerically in~\cite{PetruhanovIntJModA2022}. 

The structure of the paper is the following. In Sec.~\ref{Sec2} we consider master equation for a general $N$-level open quantum system driven by coherent and incoherent controls. In Sec.~\ref{Sec:IC}, we remind the basics of incoherent control by spectral density of the environment. In Sec.~\ref{Sec3}, main objective functionals and their derivatives with respect to the final state are presented. In Sec.~\ref{Sec5}, we calculate gradient of the objective for piecewise constant controls. In Sec.~\ref{Sec6}, the single qubit case is studied in details and complete analytical solution for evolution and gradient is obtained. In Sec.~\ref{Sec7}, numerical simulations for the state-to-state transfer are provided. Conclusions Sec.~\ref{Sec8} summarizes the work.

\section{Master equation for a quantum system driven by coherent and incoherent controls}\label{Sec2}

In this work, we consider an $N$-level open quantum system driven by coherent control $u = (u_k)_{k=1}^K$ and incoherent control $n = (n_{ij})_{1\leq i < j \leq N}$, whose density matrix evolution satisfies the GKSL master equation with control-dependent coefficients, as was introduced in~\cite{Pechen_Rabitz_2006}:
\begin{equation}
    \der{\rho_t}{t} = \mathcal{L}_t(\rho_t) = -i [H_0 + \displaystyle \sum_{k} u_k(t) V_k, \rho_t] + \L_{n(t)}(\rho_t),\qquad \rho_{t = 0} = \rho_0\,.
    \label{system}
\end{equation}
This master equation can be written as master equation with coherent control Hamniltonian $H_c(t)$ and time dependent decoherence rates $\gamma_k(t)$ (see page 3 of~\cite{Pechen_Rabitz_2006}) as
\begin{equation}\label{Eq:ME2}
\der{\rho_t}{t} = -i [H_0+H_c(t), \rho_t] + \sum\gamma_k(t) {\cal D}_k(\rho_t),
\end{equation}
The time dependent decoherence rates can be expressed via $n(t)$ (see page 3 of~\cite{Pechen_Rabitz_2006}).

In~(\ref{system}), $H_0$ is the free system Hamiltonian with eigenstates $|i\rangle$ ($i=1,\dots,N$), $V_k$ is the Hamiltonian of interaction of the system with $k$-th coherent control with values $u_k \in \R$, and superoperator $\mathcal{L}(.)_{n(t)}$ describes interaction of the system with the environment and depends on the incoherent control $n(t) = \{n_{ij}(t),\;1\le i<j\le N\}$. Important special property of the incoherent control is that by its physical meaning it takes only non-negative values, $n_{ij}(t) \ge 0$. Explicit form of $\L_{n(t)}$ is determined by the details of the physical model of the environment. There are two known forms of $\mathcal{L}_{n(t)}$ given by two different Markovian limits: weak coupling limit \cite{Spohn_1978} and low denstiy limit \cite{Dumcke}. If the environment is formed by incoherent radiation then the explicit form of $\mathcal{L}(.)_{n(t)}$ can be obtained in the weak coupling limit \cite{PechenPRA2011} and the superoperator $\L(\rho_t)$ can be written as follows: 
$$\mathcal{L} = \mathcal{L}_0 + \displaystyle \sum_k u_k(t) \mathcal{L}_k + \sum_{i<j} n_{ij}(t)\mathcal{L}_{ij},$$
where $\mathcal{L}_k = -i[V_k, \,\cdot\;]$. In this case, $\L_0$ and $\L_{ij}$ have the form~\cite{PechenPRA2011}:
\begin{align*}
    \L_0 &= -i[H_0, \cdot] + \displaystyle \sum_{i<j} A_{ij} \M_{ij}, \\
    \L_{ij} &= A_{ij}(\M_{ij} + \M_{ji}).
\end{align*}
Here $A_{ij}\ge 0$ are the Einstein coefficients for the transitons between stated $|i\rangle$ and $|j\rangle$ and
$$
\M_{ij}(\rho) = 2\ket{i}\bra{j}\rho\ket{j}\bra{i} - \ket{i}\bra{i}\rho - \rho\ket{i}\bra{i}.
$$
We denote by $\rho_t^{u,n}=\rho_t[u,n]$ the solution of the master equation~(\ref{system}) at time $t$ for the system driven by controls $u$ and $n$.

\section{Incoherent~Control}\label{Sec:IC}

Incoherent control as proposed in~\cite{Pechen_Rabitz_2006} is realized by using spectral density of the environment, which represents the state of the environment. This state can be thermal in the simplest case or non-equilibrium and time-dependent in general. Physically incoherent control can be implemented by using variable time-dependent spectral density of incoherent photons or phonons $n_\omega(t)$. This density is considered as a function of both transition frequency $\omega$ and time $t$, where frequency is indexed as a subscript for convinience. Of course, not for all transition frequencies this spectral density can  easily be experimentally made arbitrary, so practical applicability depends on the exact type of the environment (photons or phonons), spectrum of the system, and experimentally available techniques to generate various $n_\omega$. For photons, their spectral density generally should also include dependence on polarization, so that incoherent control should be a function of $\omega$, $\alpha$, and~$t$. In this work, we do not use this option and leave it for a future work. For~thermal light or black-body radiation, distribution of its photons has the famous time-independent Planck form
\[
n_\omega= \frac{1}{\pi^2}\dfrac{\omega ^3}{\exp{(\beta \omega)} - 1},
\]
where $\beta$ is the inverse temperature (in the Planck system of units, where reduced Planck constant $\hbar$, speed of light $c$, and~Boltzmann constant $k_\textrm{B}$ equal one). If temperature is varied, so that $\beta=\beta(t)$, then spectral density becomes time-dependent,
\[
n_\omega(t)= \frac{1}{\pi^2}\dfrac{\omega ^3}{\exp{(\beta (t) \omega)} - 1},
\]
However, variation of temperature is only the simplest option to vary $n_\omega(t)$ and generally spectral density can be made non-thermal with other shapes $n_\omega(t)$. Such non-thermal shapes can be obtained, e.g.,~by filtering black-body radiation, using light-emission diodes, etc, see Figure~\ref{Fig1}.

\begin{figure}[ht!]
\centering
\includegraphics[width = 0.6\linewidth]{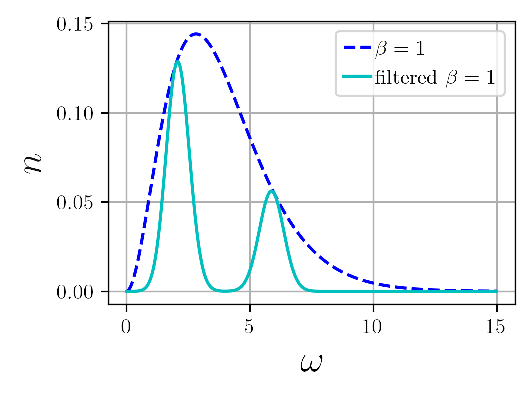}
\caption{Planck density of black-body radiation for $\beta = 1$ and its Gaussian filtering at $\omega=2$ and $\omega = 6$ with a variance $\sigma^2=0.5$.}
    \label{Fig1}
\end{figure}
Incoherent control has physical meaning of density of particles of the environment, and hence mathematically it should be a non-negative quantity, $n_\omega(t)\ge 0$. In~addition, one can require that  total density of photons at any time moment is finite, so that
\[
\int\limits_0^\infty n_\omega(t)d\omega<\infty\qquad \forall t \geq 0
\]

This incoherent control naturally makes decoherence rates time dependent. Time-dependent decoherence rates for transition between levels $i$ and $j$ with transton frequency $\omega_{ij}$ are then 
\[
\gamma_{ij}(t)=\pi\int d{\bf k}\delta(\omega_{ij}-\omega_{\bf k})|g({\bf k})|^2(n_{\omega_{ij}}(t)+\kappa_{ij})
\]
where $\kappa_{ij}=1$ for $i>j$ and $\kappa_{ij}=0$ otherwise, $\omega_{\bf k}$ is the dispersion law for the bath (e.g., $\omega=|{\bf k}|c$ for photons, where ${\bf k}$ is photon momentum and $c$ is the speed of light), and $g({\bf k})$ describes coupling of the esystem to $\bf k$-th mode of the reservoir. For $i>j$, $\kappa_{ij}=1$ corresponds to spontaneous emission and $\gamma_{ij}$ determines rate of spontaneous and induced emission between levels $i$ and $j$. For $i<j$, $\gamma_{ij}$ determines rate of induced absorption. These decoherence rates appear in Eq.~(\ref{Eq:ME2}), where $k=\{i,j\}$ is multi-index. 
The magnitude of the decoherence rates affects the speed of decay of off-diagonal elements of the system density matrix and determines the values of the diagonal elements towards which the diagonal part of the system density matrix evolves. This property can be used, for~example, for~approximate generation of various density matrices~\cite{PechenPRA2011}. 

\section{Quantum control objective functionals}\label{Sec3}
We consider as control goal the problem of optimizing an arbitrary Mayer-type objective functional, i.e., the objective functional which is completely determined by the system state at some final time $T>0$:
\begin{equation}
    \mathcal{J}(\rho_T^{u, n}) = \mathcal{F}[u,n].
    \label{functional}
\end{equation}
Some key examples of such quantum control objectives include:
\begin{itemize}
\item Maximization of the average expectation value of a system observable at the final time,
\begin{equation}
    \J_1(\rho_T^{u,n}) = \mathrm{Tr}(\rho_T^{u,n} \mathcal{O}) = \langle \rho_T^{u,n}, \mathcal{O} \rangle_\mathrm{HS} \to \max_{u,n},
\end{equation}
where $\langle \cdot , \cdot \rangle_\mathrm{HS}$ is the Hilbert-Schmidt scalar product of two matrices;
\item State-to-state transfer objective functional,
\begin{equation}
    \J_2(\rho_T^{u,n}) = \|\rho_T^{u,n} - \rho_\mathrm{target}\|^2 \to \min_{u,n},
    \label{functional_distance}
\end{equation}
where $\|\cdot\|$ is some matrix norm, which is used for finding control $(u,n)$ steering the system from the initial state $\rho_0$ to the target state $\rho_T^{u,n} = \rho_\mathrm{target}$;
\item Uhlmann--Jozsa fidelity~\cite{Mendonca_2008}, which can also be used for state-to-state transfer problem:
    \begin{eqnarray}
    \J_{\mathrm{UJ}}(\rho_T; \rho_{\rm target}) = 
    \left({\rm Tr}\sqrt{\sqrt{\rho_T} \rho_{\rm target} \sqrt{\rho_T}} \right)^2 \in [0, 1].
    \label{Uhlmann_Jozsa_fidelity}
    \end{eqnarray}
    One has $\J_{\mathrm{UJ}}(\rho_t; \rho_{\rm target}) = 1 \Leftrightarrow \rho_t = \rho_{\rm target}$.
\end{itemize}

Implementation of gradient methods requires computation of gradient of the objective functional with respect to coherent and incoherent controls:
\begin{equation}
{\rm grad}\,\F=\left(\deltapd{\mathcal{F}}{u}, \deltapd{\mathcal{F}}{n} \right).
    \label{functional_derivative}
\end{equation}
Here variational derivatives are Fréchet derivatives~\cite{Kolmogorov}. Because the functional is of Mayer-type, its gradient via chain rule for derivatives is determined by variation of the objective with respect to the final state and variation of the final state with respect to the controls:
$$
\deltapd{\F}{[u,n]} = \deltapd{\J}{\rho_T} \deltapd{\rho_T}{[u,n]}.
$$
\begin{theorem}
    Explicit expressions of 
    derivatives with respect to $\rho_T$ for the objective functionals described above can easily be computed and have the form:
    \begin{itemize}
    \item For maximization of the average expectation of a quantum observable,
    \begin{equation}
    \deltapd{\J_1}{\rho} = \langle \mathcal{O}, \; \cdot \;\rangle_\mathrm{HS}.
    \end{equation}
    \item For state-to-state transfer, if the norm $\|\cdot\|$ is induced by Hilbert-Schmidt scalar product then
    \begin{equation*}
\deltapd{\J_2}{\rho} = 2\left\langle \rho - \rho_\mathrm{target} , \quad\cdot\quad \right\rangle_\mathrm{HS}.
\end{equation*}
    \item For Uhlmann--Jozsa fidelity for a single qubit,
    \begin{equation}
        \dfrac{\partial}{\partial \rho} \J_{\mathrm{UJ}}(\rho,\rho_\mathrm{target}) = {\rm Tr}\left(\rho_\mathrm{target}\;\;\cdot\;\;\right) + \sqrt{|\rho_\mathrm{target}|}\sqrt{|\rho|} \mathrm{Tr}\left({\rho}^{-1}\;\; \cdot \;\;\right)\,.
    \end{equation}
    \end{itemize}
\end{theorem}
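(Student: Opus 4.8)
The plan is to treat the three functionals separately, since the first two are elementary and only the third requires genuine work. For $\J_1(\rho) = \langle \mathcal{O}, \rho\rangle_\mathrm{HS} = \mathrm{Tr}(\rho\mathcal{O})$ the map is already linear in $\rho$, so its Fr\'echet derivative coincides with the functional itself and $\deltapd{\J_1}{\rho} = \langle \mathcal{O},\,\cdot\,\rangle_\mathrm{HS}$ with no computation needed. For $\J_2(\rho) = \|\rho - \rho_\mathrm{target}\|^2 = \langle \rho - \rho_\mathrm{target}, \rho - \rho_\mathrm{target}\rangle_\mathrm{HS}$, I would expand in a variation $\rho \mapsto \rho + \delta\rho$, collect the term linear in $\delta\rho$, and use symmetry of the Hilbert--Schmidt product to obtain $2\langle \rho - \rho_\mathrm{target}, \delta\rho\rangle_\mathrm{HS}$; the quadratic term in $\delta\rho$ is of order $o(\|\delta\rho\|)$ and drops out, giving the stated gradient.

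The only substantial step is the Uhlmann--Jozsa fidelity. Differentiating $\mathrm{Tr}\sqrt{\sqrt{\rho}\,\rho_\mathrm{target}\sqrt{\rho}}$ directly is unpleasant, because the Fr\'echet derivative of a matrix square root is defined only implicitly (through a Sylvester-type equation) and does not interact conveniently with the trace. I would avoid this entirely by first invoking the closed-form expression of the fidelity valid specifically for a single qubit, namely
\begin{equation*}
\J_{\mathrm{UJ}}(\rho, \rho_\mathrm{target}) = \mathrm{Tr}(\rho\,\rho_\mathrm{target}) + 2\sqrt{\det\rho\,\det\rho_\mathrm{target}},
\end{equation*}
where $|\rho| \equiv \det\rho$. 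This identity follows from the observation that for $2\times 2$ positive matrices the product $M = \sqrt{\rho}\,\rho_\mathrm{target}\sqrt{\rho}$ has only two eigenvalues $\lambda_1,\lambda_2\ge 0$, whose symmetric functions are $\lambda_1+\lambda_2 = \mathrm{Tr}(M) = \mathrm{Tr}(\rho\rho_\mathrm{target})$ by cyclicity of the trace and $\lambda_1\lambda_2 = \det M = \det\rho\,\det\rho_\mathrm{target}$ by multiplicativity of the determinant; squaring $\sqrt{\lambda_1}+\sqrt{\lambda_2}$ then yields the displayed formula. This reduction replaces matrix square roots by scalar determinants and makes the expression differentiable by standard rules.

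With the closed form in hand, I would differentiate term by term. The bilinear term $\mathrm{Tr}(\rho\,\rho_\mathrm{target})$ is linear in $\rho$ and contributes $\mathrm{Tr}(\rho_\mathrm{target}\,\cdot\,)$. For the second term I would apply Jacobi's formula for the variation of a determinant, $\delta(\det\rho) = \det\rho\,\mathrm{Tr}(\rho^{-1}\delta\rho)$, together with the chain rule through the square root, giving $\delta\sqrt{\det\rho} = \tfrac{1}{2}\sqrt{\det\rho}\,\mathrm{Tr}(\rho^{-1}\delta\rho)$. Multiplying by the constant factor $2\sqrt{\det\rho_\mathrm{target}}$ then produces exactly $\sqrt{|\rho_\mathrm{target}|}\sqrt{|\rho|}\,\mathrm{Tr}(\rho^{-1}\,\cdot\,)$, and adding the two contributions reproduces the claimed derivative.

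The main obstacle, and the reason the statement is restricted to a single qubit, is precisely the availability of the closed-form fidelity: in dimension two the appearance of $\rho^{-1}$ in the answer is harmless provided $\rho$ is full rank, but for $N>2$ no such elementary formula holds and one must confront the Fr\'echet derivative of the matrix square root directly. I would therefore stress that the qubit reduction is the crux; once it is established, the remaining differentiation is a routine application of Jacobi's formula and the chain rule. A minor point to verify is the domain of validity, namely that $\rho$ is invertible so that $\mathrm{Tr}(\rho^{-1}\,\cdot\,)$ is well defined, which holds on the interior of the set of qubit density matrices.
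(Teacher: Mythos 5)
Your proposal is correct, and it follows exactly the route the paper implies: the paper itself gives no proof (the theorem is asserted as ``easily computed''), but the form of its answer --- determinants $\sqrt{|\rho|}$, $\sqrt{|\rho_{\rm target}|}$ and the factor $\mathrm{Tr}(\rho^{-1}\,\cdot\,)$ --- presupposes precisely the qubit closed form $\J_{\mathrm{UJ}} = \mathrm{Tr}(\rho\,\rho_{\rm target}) + 2\sqrt{\det\rho\,\det\rho_{\rm target}}$ followed by Jacobi's formula, which is what you use. Your derivation of that closed form via the two eigenvalues of $\sqrt{\rho}\,\rho_{\rm target}\sqrt{\rho}$, and your remark on invertibility of $\rho$, supply the details the paper omits; nothing is missing.
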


The remaining problem hence is to find variation of the final state with respect to the controls, $\delta\rho_T/\delta [u,n]$.

\section{Gradient of the objective for piecewise constant controls}\label{Sec5}

Computing gradient of any Mayer-type functional requires computing variation of the final state $\rho_T^f$ of the system. To compute this variation, below we consider the controls $u(t)$ and $n(t)$ as piecewise constant functions. Without loss of generality they can be written as follows:
\begin{equation}
    u_k(t) = \displaystyle \sum_{m=1}^M u_k^m \chi_{[t_{m-1}, t_m)}(t), \quad n_{ij}(t) = \displaystyle \sum_{m=1}^M n_{ij}^m \chi_{[t_{m-1}, t_m)}(t),
    \label{N_control}
\end{equation}
$$ 0 < t_0 < t_1 < \dots < t_M = T ,$$
where $\chi_D(t)$ is the indicator function of the set $D$. The evolution of the system driven by piecewise constant coherent and incoherent control (\ref{N_control}) exists and is unique in the space of piecewise smooth solutions. Moreover, it is given by composition of matrix exponentials:
\begin{equation} 
    \rho_T[u,n] = e^{\Delta t_M \mathcal{L}^M}\dots e^{\Delta t_1 \mathcal{L}^1} \rho_0,
    \label{N_final_state}
\end{equation}
where
$$\mathcal{L}^m \equiv \mathcal{L}(t_m), \quad \Delta t_m = t_m - t_{m-1},\quad m = 1,\dotsc,M.$$
Incoherent control $n(t)$ takes non-negative values and therefore is constrained. For adopting gradient optimization method we introduce new control functions $w_{ij}(t)$ such that:
\begin{equation}
    n_{ij} = (w_{ij})^2.
    \label{new_incoherent_control}
\end{equation}
Now the control space $\R^{\mathrm{dim}u \cdot M} \times \R^{\frac{N(N-1)}{2} \cdot M}$ is not constrained.
\begin{theorem}\label{theoremGRAPE_PC}
If control is piecewise constant and has the form~(\ref{N_control}), then derivative of the final state $\rho_T$ with respect to the components of control~$(u, w)$ is given by:
\begin{align*}
    \pd{}{u_k^m} \rho_T &= e^{\Delta t_M \mathcal{L}^M}\dots e^{\Delta t_{m+1} \mathcal{L}^{m+1}} \left(\pd{}{u_k^m}e^{\Delta t_m \mathcal{L}^m}\right) e^{\Delta t_{m-1} \mathcal{L}^{m-1}}\dots e^{\Delta t_1 \mathcal{L}^1} \rho_0,\\
    \pd{}{w_{ij}^m} \rho_T &= e^{\Delta t_M \mathcal{L}^M}\dots e^{\Delta t_{m+1} \mathcal{L}^{m+1}} \left(\pd{}{w_{ij}^m}e^{\Delta t_m \mathcal{L}^m}\right) e^{\Delta t_{m-1} \mathcal{L}^{m-1}}\dots e^{\Delta t_1 \mathcal{L}^1} \rho_0,
\end{align*}
where
\begin{align*}
    \pd{}{u_k^m}e^{\Delta t_m \mathcal{L}^m} &= \Delta t_m \displaystyle \int_0^1 \exp(\alpha \Delta t_m \mathcal{L}^m) \mathcal{L}_k^m \exp((1 - \alpha) \Delta t_m \mathcal{L}^m)\,\d\alpha,\\
    \pd{}{w_{ij}^m}e^{\Delta t_m \mathcal{L}^m} &= 2 w_{ij}^m \Delta t_m \displaystyle \int_0^1 \exp(\alpha \Delta t_m \mathcal{L}^m) \mathcal{L}_{ij}^m \exp((1 - \alpha) \Delta t_m \mathcal{L}^m)\,\d\alpha.
\end{align*}
\end{theorem}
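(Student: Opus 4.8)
The plan is to exploit the product structure of the final state in~(\ref{N_final_state}) together with the fact that, among the $M$ exponential factors, only the $m$-th one, $e^{\Delta t_m \mathcal{L}^m}$, carries any dependence on the control components $u_k^m$ and $w_{ij}^m$ of the $m$-th interval. First I would apply the ordinary product rule to the composition $e^{\Delta t_M \mathcal{L}^M}\cdots e^{\Delta t_1 \mathcal{L}^1}$: since differentiation acts as a derivation and all factors except the $m$-th are constant in these variables, the derivative is the same ordered product with the $m$-th factor replaced by its derivative. This produces at once the ``sandwiched'' expressions displayed in the statement and reduces the theorem to computing the single derivatives $\pd{}{u_k^m}e^{\Delta t_m \mathcal{L}^m}$ and $\pd{}{w_{ij}^m}e^{\Delta t_m \mathcal{L}^m}$.

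Next I would handle the dependence of $\mathcal{L}^m$ on the controls. By the explicit decomposition recalled in Sec.~\ref{Sec2}, $\mathcal{L}^m=\mathcal{L}_0+\sum_k u_k^m\mathcal{L}_k+\sum_{i<j}n_{ij}^m\mathcal{L}_{ij}$ depends affinely on $u_k^m$ and on $n_{ij}^m$, so that $\pd{\mathcal{L}^m}{u_k^m}=\mathcal{L}_k^m$. For the incoherent control the reparametrization $n_{ij}^m=(w_{ij}^m)^2$ from~(\ref{new_incoherent_control}) and the chain rule give $\pd{\mathcal{L}^m}{w_{ij}^m}=2w_{ij}^m\mathcal{L}_{ij}^m$; this is the origin of the prefactor $2w_{ij}^m$ appearing in the second formula.

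The technical core is the integral (Duhamel) formula for the derivative of a matrix exponential with respect to a scalar parameter,
\begin{equation*}
\der{}{s}e^{A(s)}=\int_0^1 e^{\alpha A(s)}\,A'(s)\,e^{(1-\alpha)A(s)}\,\d\alpha .
\end{equation*}
I would establish it by a variation-of-constants argument: set $\Phi(\tau)=\pd{}{s}e^{\tau A}$ and differentiate the identity $\pd{}{\tau}e^{\tau A}=A\,e^{\tau A}$ with respect to $s$ to obtain the linear inhomogeneous equation $\pd{}{\tau}\Phi=A\,\Phi+A'e^{\tau A}$ with initial condition $\Phi(0)=0$. Solving it yields $\Phi(\tau)=\int_0^\tau e^{(\tau-\sigma)A}A'e^{\sigma A}\,\d\sigma$, and the substitution $\alpha=1-\sigma$ at $\tau=1$ gives the displayed formula. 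Applying it with $A=\Delta t_m\mathcal{L}^m$, so that $A'=\Delta t_m\mathcal{L}_k^m$ (respectively $A'=2w_{ij}^m\Delta t_m\mathcal{L}_{ij}^m$), and factoring out the scalar $\Delta t_m$ reproduces exactly the two integral expressions in the theorem.

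The main obstacle is the clean justification of this Duhamel formula rather than the surrounding bookkeeping: one must verify that $s\mapsto e^{\tau A(s)}$ is differentiable and that the $s$-differentiation may be interchanged with the $\tau$-integration, which follows from the uniform (on compacta) convergence of the exponential power series and of its term-by-term derivatives. Existence and uniqueness of the piecewise-smooth evolution needed to make $\rho_T$ well defined are already granted in the text preceding the statement, so once the Duhamel identity is in hand the remaining steps are purely formal.
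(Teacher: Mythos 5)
Your proposal is correct and follows essentially the same route as the paper: differentiate the product~(\ref{N_final_state}) noting that only the $m$-th exponential factor depends on the $m$-th control components, apply the chain rule to the reparametrization $n_{ij}^m=(w_{ij}^m)^2$, and invoke the integral formula~(\ref{special_formula}) for the derivative of a matrix exponential. The only difference is that the paper simply cites this formula from Wilcox, whereas you derive it via a variation-of-constants argument --- a self-contained addition, but not a different proof strategy.
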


\begin{proof} Partial derivative of $\rho_T$~(\ref{N_final_state}) with respect to control on $m$-th time segment acts on only one factor in the product, which is the matrix exponential describing the dynamics on this $m$-th time segment. The partial derivative of matrix exponential is given by the known integral formula \cite{Wilcox}:
\begin{equation}
    \der{}{x} e^{A(x)} = \int \limits _0^1 e^{\alpha A(x)} \der{A(x)}{x} e^{(1-\alpha)A(x)} d\alpha.
    \label{special_formula}
\end{equation}
\end{proof}

\section{Exact solution for a single qubit: evolution and objective gradient}
\label{Sec6}
\subsection{Dynamics in the Bloch ball}
The expressions in Theorem~\ref{theoremGRAPE_PC} are quite general. In this section, we explicitly consider and solve analytically the case of a two-level quantum system or qubit. The master equation for this system has the form \cite{PechenPRA2011}:
\begin{equation}
    \der{}{t} \rho_t = - i [H_0 + V u(t), \rho_t] + \gamma \mathcal{L}_{n(t)}(\rho_t),\quad \rho(0) = \rho_0.
    \label{system_qubit}
\end{equation}
Here the free Hamiltonian is
$H_0 = \omega \begin{pmatrix}
0 & 0 \\
0 & 1
\end{pmatrix},$
the control field interaction operator is
$V = \mu\sigma_x
= \mu\begin{pmatrix}
0 & 1 \\
1 & 0
\end{pmatrix}$, where $\mu>0$ is the dipole moment, $\gamma>0$ determines strength of interaction with the environment, the dissipative superoperator is
\begin{multline*} 
\mathcal{L}_{n(t)}(\rho_t) = n(t) \left(\sigma^+\rho_t\sigma^- + \sigma^-\rho_t\sigma^+ - \dfrac{1}{2}\{\sigma^-\sigma^+ + \sigma^+\sigma^- , \rho_t\}\right) \\ + \left(\sigma^+\rho_t\sigma^- -  \dfrac{1}{2}\{\sigma^-\sigma^+, \rho_t\}\right),\quad n(t) \geqslant 0;
\end{multline*}
matrices
$ 
\sigma^\pm = \dfrac{1}{2} (\sigma_x \pm i\sigma_y)$, $\sigma^+ = \begin{pmatrix}
0 & 1 \\
0 & 0
\end{pmatrix}$, $\sigma^-= \begin{pmatrix}
0 & 0 \\
1 & 0
\end{pmatrix}$, and
$\sigma_x$, $\sigma_y$, $\sigma_z$ are the Pauli matrices. Such master equation for a single qubit with time-dependent decoherence rate $\gamma(t)$ determined by $n(t)$ was considered in~\cite{PechenPRA2011,MorzhinPechenIJTP2021,Lokutsievskiy_2021, MorzhinLJM2020, Morzhin_Pechen_maximization}.

For the two-level system, we will use the  convenient parametrization of the density matrix $\rho_t$ by the Bloch vector $\mathbf{r}=(r_1,r_2,r_3)\in\mathbb R^3$:
\begin{equation}
    r_i = \mathrm{Tr}\rho \sigma_i,\quad i \in \{x, y, z\}, \quad  \rho_t = \dfrac{1}{2} (\mathbb{I} + (\mathbf{r}, \boldsymbol{\sigma})), \quad \boldsymbol{\sigma} = (\sigma_x, \sigma_y, \sigma_z),
    \label{bloch variables}
\end{equation}
where $\mathbb{I}$ is the $2\times2$ identity matrix. In this representation, we obtain the following system of equations for the dynamics of the Bloch vector $\r(t)$:
\begin{equation}
    \der{\r(t)}{t} = (B + B^u u(t) + B^n n(t) ) \mathbf{r}(t) + \mathbf{b},
    \label{bloch equation}
\end{equation}
where
$$ \quad 
B = \begin{pmatrix}
- \dfrac{\gamma}{2} & \omega & 0 \\
- \omega & - \dfrac{\gamma}{2} & 0 \\
0 & 0 & - \gamma \\
\end{pmatrix}, \quad B^u = 
\begin{pmatrix}
0 & 0 & 0 \\
0 & 0 & -2\mu \\
0 & 2\mu & 0 \\
\end{pmatrix}, $$
$$\quad B^n = 
\begin{pmatrix}
-\gamma & 0 & 0 \\
0 & - \gamma & 0 \\
0 & 0 & -2\gamma \\
\end{pmatrix}, \quad \mathbf{b} = 
\begin{pmatrix}
0 \\
0 \\
\gamma \\
\end{pmatrix}.$$
Denote the matrix in the right hand side (r.h.s.) of the linear equation~(\ref{bloch equation}) by
$$
A[u(t),n(t)] = B + B^u u(t) + B^n n(t).
$$

Knowledge of the gradient of the Bloch vector $\r(t)$ with respect to controls allows to compute gradient of the objective functional:
$$
\mathcal{J}(\rho_T[u,n]) = \mathcal{J}_{\rm Bloch}(\mathbf{r}_T[u,n]) = \mathcal{F}[u,n],
$$
using the chain rule as
\begin{equation}
    \deltapd{\F}{[u, n]} = \deltapd{\J}{\mathbf{r}_T} \deltapd{\mathbf{r}_T}{[u, n]}.
    \label{functional_gradient_bloch}
\end{equation}

\subsection{Explicit expression for the gradient with respect to PC control}

Now consider the controls $u(t)$ and $n(t) = w(t)^2$ as piecewise constant functions:
\begin{align}
    u(t) &= \displaystyle \sum_{k=1}^M u_k \chi_{[t_{k-1}, t_k)}(t), \qquad u_k\in\mathbb R\label{PC qubit control: u}\\
    w(t) &= \displaystyle \sum_{k=1}^M w_k \chi_{[t_{k-1}, t_k)}(t), \qquad w_k\in\mathbb R\label{PC qubit control: w}\\
    n(t) &= w(t)^2 =  \displaystyle \sum_{k=1}^M n_k \chi_{[t_{k-1}, t_k)}(t), \qquad n_k\in\mathbb R_+
    \label{PC qubit control: n}
\end{align}
$$ n_k = w_k^2,\quad 0 < t_0 < t_1 < \dots < t_M = T ,$$
Denote the matrix in the right-hand side (r.h.s.) of the equation (\ref{bloch equation}) on time segment $[t_{k-1}, t_k)$ by
$$A_k =A[u_k, n_k] = B + B^u u_k + B^n n_k.$$
For solution of the equation with piecewise constant r.h.s we can write the following recurrent formula and express each state vector of the sequence~$\{\r^k\}_{k = 0}^{M}$:
\begin{align}
    \mathbf{r}^k &\equiv \mathbf{r}(t_{k}) = e^{A_k \Delta t_k} \mathbf{r}^{k - 1} + \mathbf{g}_k
    \label{recurrent eq r}\\
     &= e^{A_k \Delta t_k} \dots e^{A_1 \Delta t_1} \mathbf{r}_0 + e^{A_k \Delta t_k} \dots e^{A_2 \Delta t_2} \mathbf{g}_1 + \dotsc + e^{A_k \Delta t_k} \mathbf{g}_{k-1} + \mathbf{g}_{k},
    \label{Bloch k_th state}
\end{align}
where
\begin{equation}
    \mathbf{g}_k = (e^{A_k \Delta t_k} - \mathbb{I}) A_k^{-1} \mathbf{b}, \quad \Delta t_k  = t_k - t_{k - 1}.
    \label{g definition}
\end{equation}

\begin{theorem}
If control is piecewise constant and given by~(\ref{PC qubit control: u}) and (\ref{PC qubit control: w}) then the derivatives of the final state $\mathbf{r}_T = \r^M$ with respect to the components of control~(u, w) are equal to
\begin{equation} 
    \pd{\mathbf{r}_T}{u_j}  = e^{A_N \Delta t_N} \dots e^{A_{j + 1} \Delta t_{j + 1}}\Bigg[\, \pd{}{u_j}\left(e^{A_j \Delta t_j}\right)\,\r_{j-1} + \pd {\mathbf{g}_j}{u_j}\Bigg],
    \label{final state gradient u}
\end{equation}
\begin{equation}
    \pd{\mathbf{r}_T}{w_j}  = e^{A_N \Delta t_N} \dots e^{A_{j + 1} \Delta t_{j + 1}}\Bigg[\, \pd{}{w_j}\left(e^{A_j \Delta t_j}\right)\,\r_{j-1} + \pd {\mathbf{g}_j}{w_j}\Bigg].
    \label{final state gradient w}
\end{equation}
Here derivatives are given by
\begin{align}
    \pd {\mathbf{g}_k}{u_k} &= \left(\pd{}{u_k}e^{A_k \Delta t_k} - (e^{A_k \Delta t_k} - \I)A_k^{-1} B^u\right) A_k^{-1} \mathbf{b},\label{g gradient u}\\ 
    \pd{}{u_k}e^{A_k \Delta t_k} &= \Delta t_k\int \limits _0^1 \mathrm{exp}({\alpha A_k \Delta t_k })\, B^u\, \mathrm{exp}({(1-\alpha)A_k \Delta t_k}) \d\alpha,\label{F gradient u}\\
    \pd {\mathbf{g}_k}{w_k} &= \left(\pd{}{w_k}e^{A_k \Delta t_k} - 2 w_k (e^{A_k \Delta t_k} - \I)A_k^{-1} B^n\right) A_k^{-1} \mathbf{b},\label{g gradient w}\\ 
    \pd{}{w_k}e^{A_k \Delta t_k} &= 2 w_k \Delta t_k\int \limits _0^1 \mathrm{exp}({\alpha A_k \Delta t_k })\, B^n\, \mathrm{exp}({(1-\alpha)A_k \Delta t_k}) \d\alpha.\label{F gradient w}
\end{align}
\end{theorem}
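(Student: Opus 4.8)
The plan is to exploit the fact that the final state $\r^M$ depends on the $j$-th control values $(u_j, w_j)$ only through the single factor governing the dynamics on the $j$-th time segment, and then to differentiate the known representations of that factor. The core structural observation is that the state $\r_{j-1} = \r^{j-1}$ produced at time $t_{j-1}$ is determined entirely by the controls on segments $1,\dots,j-1$ through the recurrence~(\ref{recurrent eq r})--(\ref{Bloch k_th state}), so that $\smallpd{\r^{j-1}}{u_j} = 0$ and $\smallpd{\r^{j-1}}{w_j} = 0$. Differentiating $\r^j = e^{A_j \Delta t_j}\r^{j-1} + \mathbf{g}_j$ with respect to $u_j$ (resp. $w_j$), the term $e^{A_j \Delta t_j}\,\smallpd{\r^{j-1}}{u_j}$ then vanishes, leaving exactly the bracket in~(\ref{final state gradient u}): the contribution of $e^{A_j \Delta t_j}$ acting on $\r^{j-1}$ together with $\smallpd{\mathbf{g}_j}{u_j}$.

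Next I would propagate this forward. For $k > j$ neither $A_k$ nor $\mathbf{g}_k$ involves $(u_j, w_j)$, so differentiating $\r^k = e^{A_k \Delta t_k}\r^{k-1} + \mathbf{g}_k$ gives simply $\smallpd{\r^k}{u_j} = e^{A_k \Delta t_k}\,\smallpd{\r^{k-1}}{u_j}$. Iterating from $k = j+1$ up to $k = M$ transports the derivative through the later segments and produces the left product $e^{A_M \Delta t_M}\cdots e^{A_{j+1}\Delta t_{j+1}}$ multiplying the bracket, which is precisely the outer structure asserted in~(\ref{final state gradient u}) and~(\ref{final state gradient w}) (the index $N$ there should read $M$). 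The same induction applies verbatim for the $w_j$ derivative.

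It then remains to fill in the two quantities inside the bracket. Since $A_k = B + B^u u_k + B^n n_k$ with $n_k = w_k^2$, one has $\smallpd{A_k}{u_k} = B^u$ and $\smallpd{A_k}{w_k} = 2 w_k B^n$. Applying Wilcox's formula~(\ref{special_formula}) to $X = A_k \Delta t_k$, with $\smallpd{X}{u_k} = \Delta t_k\, B^u$ and $\smallpd{X}{w_k} = 2 w_k \Delta t_k\, B^n$, yields at once the integral expressions~(\ref{F gradient u}) and~(\ref{F gradient w}); this is the same computation already carried out in the proof of Theorem~\ref{theoremGRAPE_PC}. For the inhomogeneous vector $\mathbf{g}_k = (e^{A_k \Delta t_k} - \I) A_k^{-1}\mathbf{b}$ from~(\ref{g definition}), I would apply the product rule to its two control-dependent factors, using the identity $\smallpd{A_k^{-1}}{x} = -A_k^{-1}\bigl(\smallpd{A_k}{x}\bigr)A_k^{-1}$; with $x = u_k$ this gives
\[
\pd{\mathbf{g}_k}{u_k} = \pd{}{u_k}\!\left(e^{A_k \Delta t_k}\right) A_k^{-1}\mathbf{b} - (e^{A_k \Delta t_k} - \I) A_k^{-1} B^u A_k^{-1}\mathbf{b},
\]
which factorizes as~(\ref{g gradient u}); substituting $\smallpd{A_k}{w_k} = 2 w_k B^n$ reproduces the extra factor $2w_k$ in~(\ref{g gradient w}).

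I do not expect a deep obstacle here: the result follows from the linearity of the recurrence and Wilcox's formula, both already available. The two points requiring genuine care are, first, the bookkeeping of which factors depend on $(u_j,w_j)$ so that the peeling-off induction is correctly justified, and second, differentiating $\mathbf{g}_k$ through the matrix inverse while respecting noncommutativity of $e^{A_k\Delta t_k}$, $A_k^{-1}$ and $B^u, B^n$. This last step is the only computationally new ingredient, and it presupposes invertibility of $A_k$, which holds generically for $\gamma > 0$; if $A_k$ were singular one would instead keep the integral form $\int_0^{\Delta t_k} e^{A_k s}\mathbf{b}\,\d s$ for $\mathbf{g}_k$ and differentiate under the integral sign.
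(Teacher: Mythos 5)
Your proof is correct and takes essentially the same route as the paper's: both rest on the observation that $\mathbf{r}_T$ depends on $(u_j,w_j)$ only through the factors $e^{A_j\Delta t_j}$ and $\mathbf{g}_j$ of the $j$-th segment, then apply Wilcox's formula~(\ref{special_formula}) for the exponential derivative and the product rule (via $\smallpd{A_k^{-1}}{x}=-A_k^{-1}(\smallpd{A_k}{x})A_k^{-1}$) to $\mathbf{g}_k$ from~(\ref{g definition}). The only differences are organizational rather than substantive: you differentiate the recurrence~(\ref{recurrent eq r}) and propagate forward by induction instead of differentiating the unrolled expression~(\ref{Bloch k_th state}), and you make explicit two points the paper leaves implicit (the invertibility of $A_k$ and the fact that the index $N$ in~(\ref{final state gradient u})--(\ref{final state gradient w}) should read $M$).
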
 

\begin{proof} Expressions~(\ref{final state gradient u}) and~(\ref{final state gradient w}) can be obtained by differentiating expression for Bloch vector state $\r_k$~(\ref{Bloch k_th state}) for $k = M$ with respect to control components $u_j$ and $w_j$. 
Expressions~(\ref{g gradient u}) and~(\ref{g gradient w}) for derivatives $\pd {\mathbf{g}_k}{u_k}$, $\pd {\mathbf{g}_k}{w_k}$ are given by differentiating~(\ref{g definition}).
In $A_j = B + B^u u_j  + B^n n_j$, the matrix $B^u$ does not commute with other summands (the same is true also for $B^n$), so the derivatives $\pd{}{u_k}e^{A_k \Delta t_k}$, $\pd{}{w_k}e^{A_k \Delta t_k}$ can be found using the special formula (\ref{special_formula}).
\end{proof}

\subsection{Diagonalization of the matrix exponential}
\label{Diagonalization}

Gradient-based optimization methods is especially effective when analytical expressions for the gradient can be obtained.  For a single qubit, it can be done via computation of the matrix exponential $e^{A_k \Delta t_k}$. In order to compute this exponent, the matrix $A_k$ should be diagonalized via finding its eigenvalues and eigenvectors. 

Consider the following matrix, which depends on the control $(u, n)$ and equals to $\frac{1}{\omega}A_k$:
\begin{equation}
    \bar{A}[u,n] = \frac{1}{\omega}A_k=\begin{pmatrix}
    -\dfrac{\gamma}{2\omega} - \dfrac{\gamma}{\omega} n && 1 && 0 \\[0.7em]
    -1 && -\dfrac{\gamma}{2\omega} - \dfrac{\gamma}{\omega} n &&    -\dfrac{2\mu}{\omega} u \\[1em]
    0 &&  \dfrac{2\mu}{\omega} u && -\dfrac{\gamma}{\omega} -  \dfrac{2\gamma}{\omega} n
    \label{matrix_to_diag}
    \end{pmatrix}.
\end{equation}
For the two-level case, the characteristic equation \,$\mathrm{det}(\bar{A} - \lambda \I) = 0$\, is the cubic equation:
\begin{equation}
    \lambda^3 + 4 \hat{n} \lambda^2 + (5 \hat{n}^2 + \hat{u}^2 + 1) \lambda + \hat{u}^2 \hat{n} + 2 \hat{n}^3 + 2 \hat{n} = 0,
    \label{characterstic_equation}
\end{equation} 
where
\begin{equation}
    \hat{u} = \dfrac{2\mu u}{\omega},\quad \hat{n} = \dfrac{\gamma}{\omega} \left(n + \dfrac{1}{2}\right).
    \label{control_hat}
\end{equation}
This equation can be solved by implementing Cardano's method \cite{Cardano_formula}. Denote:
\begin{align*} 
    p &= \dfrac{\hat{u}^2}{3}             -\dfrac{\hat{n}^2}{9} + \dfrac{1}{3}, \\
    q &= - \dfrac{\hat{u}^2 \hat{n}}{6} + \dfrac{\hat{n}^3}{27} + \dfrac{ \hat{n}}{3}, \\
    \Delta &= p^3 + q^2 = 
     \dfrac{\hat{u}^6}{27} -\dfrac{\hat{u}^4 \hat{n}^2}{108}  + \dfrac{ \hat{u}^4}{9} - \dfrac{5 \hat{u}^2 \hat{n}^2}{27} + \\ &\qquad\qquad\qquad\qquad\qquad\qquad\dfrac{ \hat{n}^4}{27} + \dfrac{ \hat{u}^2}{9} + \dfrac{2 \hat{n}^2}{27}  + \dfrac{1}{27}.
\end{align*}
If $p \neq 0$, then roots of the characteristic equation (\ref{characterstic_equation}) are
\begin{equation}
    \lambda_k = -\dfrac{4 \hat{n}}{3} + \xi_k - \dfrac{p}{\xi_k},\quad k = 1,2,3,
    \label{eigenvalues_1}
\end{equation}
where $\xi_k$ are three cubic roots
\begin{equation*}
    \xi_k = \sqrt[3]{-q + \sqrt{\Delta}}.
\end{equation*}
For the root $\sqrt{\Delta}$,  any of the two roots can be chosen (for example, positive real or with positive imaginary part).

If $p = 0$, then the roots of the characteristic equation (\ref{characterstic_equation}) are:
\begin{equation}
    \lambda_k =  -\dfrac{4 \hat{n}}{3} + \xi_k,\quad k = 1,2,3,
    \label{eigenvalues_2}
\end{equation}
where
$$\xi_k = \sqrt[3]{-2q}.$$

Let us now find eigenvectors of the matrix (\ref{matrix_to_diag}). The matrix
\begin{equation}
    \bar{A} - \lambda \I = \begin{pmatrix}
-\hat{n} - \lambda & 1 & 0\\
-1 & -\hat{n} - \lambda & -\hat{u} \\
0 & \hat{u} & -2\hat{n}-\lambda
\end{pmatrix}
\label{matrix A minus lambda I}
\end{equation}
has rank $2$ if $1 + (n+\lambda)^2 \neq 0$, i.e. $\lambda \neq -n \pm i$. Then each eigenvalue has one-dimensional eigenspace and corresponding eigenvector is
\begin{equation}
    h_\lambda = \begin{pmatrix}
    \dfrac{\hat{u}}{1 + (\hat{n} + \lambda)^2} \\[0.9em]
    \dfrac{\hat{u}(\hat{n} + \lambda)}{1 + (\hat{n} + \lambda)^2}\\[0.8em]
    -1
    \end{pmatrix}.
    \label{eigenvector}
\end{equation}
The case $\lambda=-n \pm i$, when the denominator in the expression for $h_\lambda$ above is zero, will be considered in subsection~\ref{Sec:exact solution for hat u = 0} --- this case corresponds to complete absence of coherent control.

Otherwise, if there is a double or triple root then we might need to get a matrix in Jordan normal form instead of diagonal matrix. Corresponding Jordan chains are: $(h_\lambda, h'_\lambda)$ --- for double root, and $(h_\lambda,  h'_\lambda, h''_\lambda)$ --- for triple root, where
\begin{align}
    h'_\lambda &= \dfrac{\hat{u}}{[1 + (\hat{n} + \lambda)^2]^2}\begin{pmatrix}
    -2(\hat{n} + \lambda) \\[0.4em]
    1 - (\hat{n}+\lambda)^2\\[0.4em]
    0
    \end{pmatrix}, \label{jordan_chain:1} \\[0.4em]
    h''_\lambda &= \dfrac{\hat{u}}{[1 + (\hat{n} + \lambda)^2]^3}\begin{pmatrix}
    3(\hat{n} + \lambda)^2 - 1 \\[0.4em]
    ((\hat{n}+\lambda)^2 - 3)(\hat{n}+\lambda)\\[0.4em]
    0
    \end{pmatrix}.
    \label{jordan_chain:2}
\end{align}

Depending on multiplicities of the roots~(\ref{eigenvalues_1}) and~(\ref{eigenvalues_2}) there are three typical cases of roots' character of the characteristic equation~(\ref{characterstic_equation}) and they can be determined by sign of the coefficient $\Delta$:
\begin{enumerate}
    \item $\lambda_1$, $\lambda_2$, $\lambda_3$ --- three different eigenvalues (all real under the condition $\Delta < 0$ or one real and two complex conjugate under $\Delta > 0$);
    \item $\lambda_1$, $\lambda_2$ --- one eigenvalue of multiplicity 1 and one double eigenvalue if $\Delta = 0$ ($|p|^2 + |q|^2 \neq 0$);
    \item $\lambda$ --- one triple eigenvalue if $p = q = 0$ (and hence $\Delta = 0$).
\end{enumerate}

We diagonalize matrix $\bar{A}$~(\ref{matrix_to_diag}) and matrix exponential $e^{A[u,n] \Delta t}$ as follows:
\begin{equation}
    \bar{A} = S \Lambda S^{-1},\quad e^{A \Delta t} = e^{\bar{A}\omega \Delta t} = S e^{\Lambda\omega \Delta t} S^{-1},
    \label{matrix_exponential_diagonalized}
\end{equation}
where matrix $\Lambda$ and $S$ are different for each of three cases. Let us study this cases in more detail.

\subsubsection{The case $\Delta \neq 0$}
In the case $\Delta \neq 0$ there are three different eigenvalues~$\lambda_k$ and therefore there are three eigenvectors given by formula~(\ref{eigenvector}) of matrix~(\ref{matrix_to_diag}) for each eigenvalues. Thus in~(\ref{matrix_exponential_diagonalized}) 
\begin{equation}
    \Lambda = \mathrm{diag}(\lambda_1,\lambda_2,\lambda_3),
    \quad e^{\Lambda \omega \Delta t} = \mathrm{diag}(e^{\lambda_1 \omega \Delta t}, e^{\lambda_2 \omega \Delta t}, e^{\lambda_3 \omega \Delta t}),
    \quad S = (h_{\lambda_1},h_{\lambda_2},h_{\lambda_3}).
    \label{case Delta neq 0}
\end{equation}
Expression for the eigenvector~(\ref{eigenvector}) is obtained under the condition $1 + (n + \lambda)^2 \neq 0$. Expressions for the case $1 + (n + \lambda)^2 = 0$ are obtained in Section~\ref{Sec:exact solution for hat u = 0}.

\subsubsection{The case $\Delta = 0$ ($|p|^2 + |q|^2 \neq 0$)}
In the case $\Delta \neq 0$ we have one root of the characteristic equation~(\ref{characterstic_equation}) $\lambda_1$ with multiplicity 1 and one double root $\lambda_2$:
\begin{equation*}
    \lambda_1 = -\dfrac{4\hat{n}}{3},\quad \lambda_2 = -\dfrac{4\hat{n}}{3} + p.
\end{equation*}
Eigenvector~(\ref{eigenvector}) of $\lambda_1$ is
\begin{equation*}
    h_{\lambda_1} = \begin{pmatrix}
        \dfrac{4}{1 + \frac{\hat{n}^2}{9}} \\[1em]
        -\dfrac{\hat{u}\hat{n}}{3(1 + \frac{\hat{n}^2}{9})} \\[1em]
        -1
    \end{pmatrix}.
\end{equation*}
Eigenvalue $\lambda_2$ has one-dimensional space, therefore we have to use Jordan normal form to calculate the matrix exponential. Vectors of Jordan chain $(h_{\lambda_2}, h_{\lambda_2}')$~(\ref{eigenvector}) and (\ref{jordan_chain:1}) can be written as
\begin{equation*}
    h_{\lambda_2} = \begin{pmatrix}
        \dfrac{\hat{u}}{1 + \left(\frac{\hat{n}}{3} - p\right)^2} \\[1.2em] 
        \dfrac{\hat{u}\left(\frac{\hat{n}}{3} - p\right)}{1 + \left(\frac{\hat{n}}{3} - p\right)^2} \\[1em]
        -1
    \end{pmatrix},\quad h_{\lambda_2}' = \dfrac{1}{\left[1 + \left(\frac{\hat{n}}{3} - p\right)^2\right]^2}\begin{pmatrix}
        2\left(\frac{\hat{n}}{3} - p\right)\\[0.2em]
        1 - \left(\frac{\hat{n}}{3} - p\right)^2 \\[0.2em]
        0
    \end{pmatrix}.
\end{equation*}
Thus
\begin{equation}
    \Lambda = 
    \begin{pmatrix}
        \lambda_1 & 0 & 0\\
        0 & \lambda_2 & 1\\
        0 & 0 & \lambda_2
    \end{pmatrix},\quad e^{\Lambda \omega \Delta t} = \begin{pmatrix}
        e^{\lambda_1 \omega\Delta t} & 0 & 0 \\[0.2em]
        0 & e^{\lambda_2 \omega\Delta t} & \omega\Delta t \,e^{\lambda_2 \omega\Delta t} \\[0.2em]
        0 & 0 & e^{\lambda_2 \omega\Delta t}
    \end{pmatrix},
    \quad S = (h_{\lambda_1},h_{\lambda_2},h'_{\lambda_2}).
    \label{case Delta = 0}
\end{equation}

\subsubsection{The case $p = q = 0$}
The last case is the condition $p = q = 0$ which corresponds to triple root of characteristic equation~(\ref{characterstic_equation}).  In this case $\hat{u} = 2\sqrt{2}$, $\hat{n} = 3\sqrt{3}$ and the triple root is
\begin{equation*}
    \lambda = -\dfrac{4\hat{n}}{3} = -4\sqrt{3}.
\end{equation*}
This eigenvalue has only one eigenvector and therefore we have to use Jordan normal form to calculate the matrix exponential. Jordan chain vectors $(h_{\lambda_2}, h_{\lambda_2}')$~(\ref{eigenvector}),~(\ref{jordan_chain:1}), and~(\ref{jordan_chain:2}) are equal to
\begin{equation*}
    h_\lambda = \begin{pmatrix}
        \frac{1}{\sqrt{2}} \\[0.5em]
        -\frac{\sqrt{3}}{\sqrt{2}} \\[0.5em]
        -1
    \end{pmatrix},\quad 
    h_\lambda' = \begin{pmatrix}
        \frac{\sqrt{3}}{2\sqrt{2}} \\[0.5em]
        \frac{1}{2\sqrt{2}} \\[0.5em]
        0
    \end{pmatrix}, \quad 
    h_\lambda'' = \begin{pmatrix}
        \frac{1}{2\sqrt{2}} \\[0.5em]
        0 \\[0.5em]
        0
    \end{pmatrix}.
\end{equation*}
Then 
\begin{equation}
    \Lambda = 
    \begin{pmatrix}
        \lambda & 1 & 0\\
        0 & \lambda & 1\\
        0 & 0 & \lambda
    \end{pmatrix},
    \quad e^{\Lambda \omega \Delta t} = e^{\lambda \omega \Delta t}\begin{pmatrix}
            1 & \omega \Delta t & \dfrac{(\omega \Delta t)^2}{2} \\[0.5em]
            0 & 1 & \omega \Delta t \\
            0 & 0 & 1
        \end{pmatrix},
    \quad S = (h_\lambda,h'_\lambda,h''_\lambda).
    \label{case Delta p = q = 0}
    \end{equation}

\subsection{Exact solution for the case $u=0$}
\label{Sec:exact solution for hat u = 0}

As a particular case, we consider $u=0$ (in this case $\hat{u} = 0$) so that only incoherent control is applied to the system.

\begin{proposition}
Coherent control (\ref{control_hat}) equals zero $\hat{u} = 0$ if and only if characteristic equation (\ref{characterstic_equation}) has roots $\lambda = - n \pm i$:
$$\hat{u} = 0 \Longleftrightarrow \lambda = - n \pm i.$$
\end{proposition}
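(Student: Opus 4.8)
The plan is to recast the cubic characteristic equation~(\ref{characterstic_equation}) into a factored form that isolates the two special eigenvalues. Expanding $\det(\bar{A}-\lambda\I)$ for the matrix~(\ref{matrix A minus lambda I}) (or, equivalently, multiplying out the right-hand side below) gives the identity
\begin{equation*}
\lambda^3 + 4\hat{n}\lambda^2 + (5\hat{n}^2+\hat{u}^2+1)\lambda + \hat{u}^2\hat{n} + 2\hat{n}^3 + 2\hat{n} = (\lambda+2\hat{n})\bigl[1+(\lambda+\hat{n})^2\bigr] + \hat{u}^2(\lambda+\hat{n}).
\end{equation*}
I would verify this by direct expansion; substituting $s=\lambda+\hat{n}$ turns both sides into the same polynomial in $s$ and $\hat{n}$. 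The entire argument then rests on a single observation: the bracket $1+(\lambda+\hat{n})^2$ vanishes exactly at $\lambda=-\hat{n}\pm i$, which are the candidate roots in the statement (reading $\hat{n}$ for the $n$ printed there, consistent with~(\ref{characterstic_equation}) and with the eigenvector denominator $1+(\hat{n}+\lambda)^2$).

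For the forward implication, I would assume $\hat{u}=0$. Then the factored equation collapses to $(\lambda+2\hat{n})\bigl[1+(\lambda+\hat{n})^2\bigr]=0$, whose roots are $\lambda=-2\hat{n}$ and $\lambda=-\hat{n}\pm i$. Hence $\lambda=-\hat{n}\pm i$ indeed solve~(\ref{characterstic_equation}). This matches the structural picture as well: at $\hat{u}=0$ the matrix $\bar{A}$ becomes block diagonal, with a rotation-type block $\left(\begin{smallmatrix}-\hat{n} & 1\\ -1 & -\hat{n}\end{smallmatrix}\right)$ of eigenvalues $-\hat{n}\pm i$ and a scalar block $-2\hat{n}$.

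For the converse, I would substitute $\lambda=-\hat{n}\pm i$ into the factored equation. Since $\lambda+\hat{n}=\pm i$, the bracket is $1+(\pm i)^2=0$, so the first term drops out and the equation reduces to $\hat{u}^2(\lambda+\hat{n})=\pm i\,\hat{u}^2=0$. As $\hat{u}$ is real, this forces $\hat{u}^2=0$, i.e. $\hat{u}=0$. Note that assuming either one of the two values is a root already suffices, and that once $\hat{u}=0$ is established both values are roots by the forward step, so the equivalence is clean in both readings.

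I do not anticipate a genuine obstacle: the only nontrivial ingredient is the factorization, and once it is in hand each direction is a one-line substitution exploiting that $1+(\lambda+\hat{n})^2$ annihilates exactly the two terms carrying $\hat{u}$. The one point I would flag explicitly is the $n$ versus $\hat{n}$ discrepancy in the statement, since it is $\hat{n}$ that makes the equivalence exact.
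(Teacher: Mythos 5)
Your proof is correct. The factorization you build it on,
\begin{equation*}
\lambda^3 + 4\hat{n}\lambda^2 + (5\hat{n}^2+\hat{u}^2+1)\lambda + \hat{u}^2\hat{n} + 2\hat{n}^3 + 2\hat{n} \;=\; (\lambda+2\hat{n})\bigl[1+(\lambda+\hat{n})^2\bigr] + \hat{u}^2(\lambda+\hat{n}),
\end{equation*}
is exactly what cofactor expansion of $\det(\bar{A}-\lambda\I)$ along the first row produces (up to overall sign), and it does reduce both implications to one-line substitutions. Your forward direction coincides with the paper's: for $\hat{u}=0$ the paper factors the determinant in the same way, as $-((\hat{n}+\lambda)^2+1)(2\hat{n}+\lambda)$. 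Where you genuinely diverge is the converse. The paper does not invoke any factorization valid for general $\hat{u}$; it instead applies Vieta's formulas to the cubic~(\ref{characterstic_equation}): the coefficient of $\lambda^2$ forces the third root to be $-2\hat{n}$, and the coefficient of $\lambda$ then gives $5\hat{n}^2+1 = 5\hat{n}^2+\hat{u}^2+1$, hence $\hat{u}^2=0$. Your route buys a single identity from which both directions drop out, and it makes the mechanism transparent (the bracket annihilates everything except the terms carrying $\hat{u}^2$ precisely at $\lambda=-\hat{n}\pm i$); it also needs only one of the two candidate values to be a root, whereas the paper assumes the conjugate pair — an immaterial difference since the coefficients are real. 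The paper's route, in exchange, works directly from the printed coefficients of the cubic without exhibiting the factorization, at the cost of a two-step argument. Your flag of the $n$ versus $\hat{n}$ mismatch is also warranted: the proposition (and the paper's own proof) writes $\lambda=-n\pm i$ where $\lambda=-\hat{n}\pm i$ is meant, as your factorization makes plain.
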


\begin{proof}
    Firstly, if $\hat{u} = 0$ then one can use obtained expression~(\ref{eigenvalues_1}) given by Cardano method, or we can just consider matrix (\ref{matrix A minus lambda I}). Then one real root is already factorized:
    $$
    \det (A - \lambda \I) = -((\hat{n} + \lambda)^2 + 1)(2\hat{n} + \lambda) = 0.
    $$
    Thus one root is $\lambda = -2\hat{n}$ and other two roots are complex conjugate $\lambda = -n \pm i$.
    
    Conversely, let characteristic equation~(\ref{characterstic_equation}) have two complex conjugate roots $\lambda = -n \pm i$. We need to show that then necessarily $\hat{u} = 0$. For that we can use Vieta's formulas. Implementing Vieta's formula to the coefficient at $\lambda^2$:
    \begin{align*}
        \lambda_1 + \lambda_2 + \lambda_3 &= -4\hat{n},\\
        -2\hat{n} + \lambda_3 &= -4\hat{n}
    \end{align*}
    we get that the third root is $\lambda = - 2\hat{n}$. 
    Now write Vieta's formula for the coefficient at $\lambda$:
    \begin{align*}
        \lambda_1 \lambda_2 + \lambda_1 \lambda_3 + \lambda_2 \lambda_3 &= 5\hat{n}^2 + \hat{u}^2 + 1,\\
        5\hat{n}^2 + 1 &= 5\hat{n}^2 + \hat{u}^2 + 1,\\
        0 &= \hat{u}^2.
    \end{align*}
    Thus we get that $\hat{u} = 0.$
    This finishes the proof.
\end{proof}

From this proved proposition it follows that the condition $1 + (n+\lambda)^2 \neq 0$ from Section~(\ref{Diagonalization}) is equivalent to $\hat{u} \neq 0$. Let us cover now the remaining case $\hat{u} = 0$.

As was shown earlier, $\hat{u} = 0$ implies that the roots of the characteristic equation~(\ref{characterstic_equation}) are
$$\lambda_{1,2} = -\hat{n} \pm i, \quad \lambda_3 = -2\hat{n}.$$
Corresponding eigenvectors are found as
$$h_{1,2} = \begin{pmatrix}
    1\\
    \pm i\\
    0
\end{pmatrix},\quad h_3 = \begin{pmatrix}
    0\\
    0\\
    1
\end{pmatrix}.$$
Then the matrix exponential~(\ref{matrix_exponential_diagonalized}) takes comprehensible form:
$$e^{A\Delta t} = \begin{pmatrix}
    e^{-\hat{n}\omega\Delta t}\cos{(\hat{n}\omega\Delta t)} & e^{-\hat{n}\omega\Delta t}\sin{(\hat{n}\omega\Delta t)} & 0 \\
    -e^{-\hat{n}\omega\Delta t}\sin{(\hat{n}\omega\Delta t)} & e^{-\hat{n}\omega\Delta t}\cos{(\hat{n}\omega\Delta t)} & 0 \\
    0 & 0 & e^{-2\hat{n}\omega\Delta t}.
\end{pmatrix}$$
Thus it simply describes an exponentially decaying rotation in the $xy$-plane and exponential decay along $z$-axis in case $\hat{n} > 0$.

Then exact evolution is as follows:
\begin{equation}
    \r = \begin{pmatrix}
    e^{-\hat{n}\omega\Delta t}\cos{(\hat{n}\omega\Delta t)} & e^{-\hat{n}\omega\Delta t}\sin{(\hat{n}\omega\Delta t)} & 0 \\
    -e^{-\hat{n}\omega\Delta t}\sin{(\hat{n}\omega\Delta t)} & e^{-\hat{n}\omega\Delta t}\cos{(\hat{n}\omega\Delta t)} & 0 \\
    0 & 0 & e^{-2\hat{n}\omega\Delta t}
    \end{pmatrix} \r_0 +
    \begin{pmatrix}
        0 \\
        0 \\
        \dfrac{1 - e^{-2\hat{n}\omega\Delta t}}{2n + 1}
    \end{pmatrix}.
    \label{exact evolution for hat u = 0}
\end{equation}

\subsection{Exact solution for the case $\gamma=0$}

The condition $\hat \gamma = 0$ corresponds to the case when the system~(\ref{system_qubit}) is closed, i.e., isolated from the environment.
In this case $\hat{n} = 0$ and the eigenvalues~(\ref{eigenvalues_1}) and~(\ref{eigenvalues_2}) take a much simplier form:
\begin{equation}
    \lambda = 0,\;\pm i\sqrt{1 + u^2}.
\end{equation}
The corresponding eigenvectors are equal to
\begin{equation}
    h_0 = \begin{pmatrix}
        \hat{u}\\
        0\\
        -1
    \end{pmatrix},\quad h_{\pm i\sqrt{1 + u^2}} = \begin{pmatrix}
        -\dfrac{1}{\hat{u}}\\[0.8em]
        \mp i\dfrac{\sqrt{1 + u^2}}{\hat{u}}\\[0.8em]
        -1
    \end{pmatrix}
\end{equation}
This allows to obtain for the matrix exponential the expression:
\begin{equation}
    e^{A\Delta t} = 
    \begin{pmatrix}
        \dfrac{\hat{u}^2 + \cos{(\bar{\omega}\Delta t)}}{1 + \hat{u}^2} & 
        \dfrac{\sin{(\bar{\omega}\Delta t)}}{\sqrt{1 + \hat{u}^2}} & 
        \dfrac{\hat{u}}{1 + \hat{u}^2}(-1 + \cos{(\bar{\omega}\Delta t)}) \\
        - \dfrac{\sin{(\bar{\omega}\Delta t)}}{\sqrt{1 + \hat{u}^2}} &
        \cos{(\bar{\omega}\Delta t)} &
        -\dfrac{\hat{u}}{\sqrt{1 + \hat{u}^2}}\sin{(\bar{\omega}\Delta t)} \\
        \dfrac{\hat{u}}{1 + \hat{u}^2}(-1 + \cos{(\bar{\omega}\Delta t)}) &
        \dfrac{\hat{u}}{\sqrt{1 + \hat{u}^2}}\sin{(\bar{\omega}\Delta t)} &
        \dfrac{1 + \hat{u}^2\cos{(\bar{\omega}\Delta t)}}{1 + \hat{u}^2},
        \label{matrix_exponential_gamma=0}
    \end{pmatrix}
\end{equation}
where $\bar{\omega} =\omega \sqrt{1 + \hat{u}^2}$. The corresponding evolution of the Bloch vector is
\begin{equation}
    \r = e^{A\Delta t} \r_0.
\end{equation}

Now we compare these expressions for the case $\gamma=0$ with analytical expressions for unitary evolution matrix in closed quantum system given in~\cite{Volkov_2021_215303}. In that work Hamiltonian has the form~$H = \sigma_z + a \sigma_y$ which is equivalent to our system with $\gamma = 0$ (closed system), $\omega=-2$, $\mu = 1$, and control $u = a$. Then $\hat{u} = {2\mu u}/{\omega} = -a$ and $\bar{\omega} \Delta t = \omega\sqrt{1 + \hat{u}^2} \Delta t = -2\alpha $. Closed system dynamics is fully defined by the unitary evolution matrix~$U$ and for constant coherent control~$a$ is~\cite{Volkov_2021_215303}:
\begin{equation*}
    U=\cos\alpha-i\Delta t(\sigma_z+a\sigma_x)\frac{\sin\alpha}{\alpha},\quad \alpha=\Delta t\sqrt{1+a^2},
\end{equation*}
The corresponding evolution of the Bloch vector is:
\begin{equation*}
    r_i = \mathrm{Tr}(\rho \sigma_i) = \mathrm{Tr} (U\rho_0U^\dagger \sigma_i) = \mathrm{Tr}(U^\dagger\sigma_iU\rho_0).
\end{equation*}
Decomposing $U^\dagger\sigma_iU = \displaystyle\sum_{j \in \{x, y, z\}}\xi^i_j\sigma_j$, we can represent it as
\begin{equation*}
    r_i = \displaystyle\sum_{j \in \{x, y, z\}}\xi^i_j r_{j0}.
\end{equation*}
Then matrix $\|\xi^i_j\|$ can be easily found by calculating~$U^\dagger\sigma_iU$ using algebraic properties of Pauli matrices:
\begin{equation*}
   \|\xi^i_j\| = \begin{pmatrix}
       \dfrac{a^2 + \cos{2\alpha}}{1 + a^2} & -\dfrac{\sin{2\alpha}}{\sqrt{1 + a^2}} & \dfrac{a}{1 + a^2}(1 - \cos{2\alpha}) \\
       \dfrac{\sin{2\alpha}}{\sqrt{1 + a^2}} &
       \cos{2\alpha} &
       -\dfrac{a}{\sqrt{1 + a^2}} \sin{2\alpha}\\
       \dfrac{a}{1 + a^2}(1 - \cos{2\alpha}) &
       \dfrac{a}{\sqrt{1 + a^2}} \sin{2\alpha} &
       \dfrac{1 + a^2\cos{2\alpha}}{1 + a^2}
   \end{pmatrix},
\end{equation*}
which corresponds to matrix exponential~(\ref{matrix_exponential_gamma=0}) and hence the result of the present work agrees with the coherenc control case of closed two-level quantum systems.

\subsection{Exact analytical formula for gradient-based optimization for a qubit}
We summarize the above results as the following theorem.

\begin{theorem}
    If the system~(\ref{system_qubit}) is controlled by piecewise control~(\ref{PC qubit control: u}) and (\ref{PC qubit control: w}), then the derivatives of the final Bloch vector $\mathbf{r}_T = \r^M$ with respect to the components of control~(u, w) are
    \begin{equation} 
        \pd{\mathbf{r}_T}{u_j}  = S_N e^{\Lambda_N \omega\Delta t_N} S_N^{-1}\dots S_{j + 1}e^{\Lambda_{j + 1} \Delta t_{j + 1}}S_{j + 1}^{-1}\Bigg[\, \pd{}{u_j}\left(e^{A_j \Delta t_j}\right)\,\r_{j-1} + \pd {\mathbf{g}_j}{u_j}\Bigg],
        \label{final state gradient u diagonalized}
    \end{equation}
    \begin{equation}
        \pd{\mathbf{r}_T}{w_j}  = S_N e^{\Lambda_N \omega\Delta t_N} S_N^{-1}\dots S_{j + 1}e^{\Lambda_{j + 1} \Delta t_{j + 1}}S_{j + 1}^{-1}\Bigg[\, \pd{}{w_j}\left(e^{A_j \Delta t_j}\right)\,\r_{j-1} + \pd {\mathbf{g}_j}{w_j}\Bigg].
        \label{final state gradient w diagonalized}
    \end{equation}
Here matrices $S_k$ and $e^{\Lambda_k \omega\Delta t_k}$ are determined by control~$(u_k, w_k)$ and have one of the three possible forms~(\ref{case Delta neq 0}),~ (\ref{case Delta = 0}),~and~(\ref{case Delta p = q = 0}) which correspond to the cases 1, 2, 3, depending on the values of the coefficients $\Delta$, $p$, and $q$ for the control~$(u_k,w_k)$. The derivatives in the formulae~(\ref{final state gradient u diagonalized})~and~(\ref{final state gradient w diagonalized}) are
    \begin{align}
        \pd {\mathbf{g}_k}{u_k} &= \left(\pd{}{u_k}e^{A_k \Delta t_k} - (S_k e^{\Lambda_k \omega\Delta t_k} S_k^{-1} - \I)A_k^{-1} B^u\right) A_k^{-1} b, 
        \nonumber\\ 
        \pd{}{u_k}e^{A_k \Delta t_k} &= \Delta t_k\int \limits _0^1 S_k \mathrm{exp}({\alpha\Lambda_k \omega\Delta t_k}) S_k^{-1}\, B^u\, S_k \mathrm{exp}({(1-\alpha)\Lambda_k \omega\Delta t_k} S_k^{-1})\d\alpha, \label{gradient_exp_u_diag}\\ 
        \pd {\mathbf{g}_k}{w_k} &= \left(\pd{}{w_k}e^{A_k \Delta t_k} - 2 w_k (S_k e^{\Lambda_k \omega\Delta t_k} S_k^{-1} - \I)A_k^{-1} B^n\right) A_k^{-1} b, \nonumber\\ 
        \pd{}{w_k}e^{A_k \Delta t_k} &= 2 w_k \Delta t_k\int \limits _0^1 S_k \mathrm{exp}({\alpha A_k \Delta t_k })S_k^{-1}\, B^n\, S_k\mathrm{exp}((1-\alpha)\Lambda_k \omega\Delta t_k) S_k^{-1}\d\alpha. \label{gradient_exp_w_diag}
    \end{align}
\end{theorem}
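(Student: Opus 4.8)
The plan is to derive the stated formulas by substituting the diagonalization established in Section~\ref{Diagonalization} into the general piecewise-constant gradient formulas already proved in the preceding theorem, equations~(\ref{final state gradient u}) and~(\ref{final state gradient w}). Those formulas express $\pd{\mathbf{r}_T}{u_j}$ and $\pd{\mathbf{r}_T}{w_j}$ as ordered products of the matrix exponentials $e^{A_k \Delta t_k}$ multiplied by the derivative terms $\pd{}{u_j}(e^{A_j \Delta t_j})$ and $\pd{\mathbf{g}_j}{u_j}$ (and their $w$-analogues). Hence the only remaining work is to rewrite every occurrence of $e^{A_k \Delta t_k}$ through its diagonalized form, so the proof is essentially a bookkeeping substitution that inherits the validity of the earlier theorem.

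First I would invoke the factorization~(\ref{matrix_exponential_diagonalized}), namely $e^{A_k \Delta t_k} = S_k e^{\Lambda_k \omega \Delta t_k} S_k^{-1}$, and insert it into each factor of the product $e^{A_N \Delta t_N}\cdots e^{A_{j+1}\Delta t_{j+1}}$ appearing in~(\ref{final state gradient u}). This directly produces the chain $S_N e^{\Lambda_N \omega \Delta t_N} S_N^{-1}\cdots S_{j+1} e^{\Lambda_{j+1}\omega \Delta t_{j+1}} S_{j+1}^{-1}$ of~(\ref{final state gradient u diagonalized}), and identically for the $w$-derivative. Which of the three explicit forms~(\ref{case Delta neq 0}),~(\ref{case Delta = 0}),~(\ref{case Delta p = q = 0}) the pair $(S_k,\Lambda_k)$ assumes is fixed solely by the values of $\Delta$, $p$, $q$ computed from $(u_k,w_k)$, as catalogued in the three subsections; since relation~(\ref{matrix_exponential_diagonalized}) holds in every case, the substitution is uniform across all controls.

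Next I would handle the derivative of the exponential itself. Beginning from the integral formula~(\ref{special_formula}) already employed in the previous proof, I substitute $\exp(\alpha A_k \Delta t_k) = S_k \exp(\alpha \Lambda_k \omega \Delta t_k) S_k^{-1}$ and $\exp((1-\alpha) A_k \Delta t_k) = S_k \exp((1-\alpha)\Lambda_k \omega \Delta t_k) S_k^{-1}$ into the integrand, which yields the diagonalized integrals~(\ref{gradient_exp_u_diag}) and~(\ref{gradient_exp_w_diag}). The $\mathbf{g}$-derivatives~(\ref{g gradient u}) and~(\ref{g gradient w}) then carry over once the exponential inside them is rewritten via~(\ref{matrix_exponential_diagonalized}), delivering the stated expressions for $\pd{\mathbf{g}_k}{u_k}$ and $\pd{\mathbf{g}_k}{w_k}$.

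The step demanding the most care is the degenerate case $\Delta = 0$, where $\Lambda_k$ is not diagonal but upper-triangular and $e^{\Lambda_k \omega \Delta t_k}$ carries the polynomial-in-$\Delta t_k$ entries of~(\ref{case Delta = 0}) and~(\ref{case Delta p = q = 0}). Here one must verify that~(\ref{matrix_exponential_diagonalized}) is still the correct reduction; this holds because $S_k$ is assembled from the Jordan chains~(\ref{eigenvector}),~(\ref{jordan_chain:1}),~(\ref{jordan_chain:2}) rather than from three independent eigenvectors. One should also note that the $\alpha$-integrals in~(\ref{gradient_exp_u_diag})--(\ref{gradient_exp_w_diag}) remain elementary, since their integrands are products of exponentials with at most polynomial prefactors in $\alpha$. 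Finally, the excluded locus $1+(\hat{n}+\lambda)^2 = 0$, equivalent to $\hat{u} = 0$ by the Proposition of Section~\ref{Sec:exact solution for hat u = 0}, falls outside the three-case diagonalization and is instead covered by the explicit evolution~(\ref{exact evolution for hat u = 0}); together with the three cases this exhausts all admissible controls, completing the argument.
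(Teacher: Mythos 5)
Your proposal is correct and follows essentially the same route as the paper, which presents this theorem as a summary obtained by substituting the diagonalization~(\ref{matrix_exponential_diagonalized}) (in its three forms~(\ref{case Delta neq 0}), (\ref{case Delta = 0}), (\ref{case Delta p = q = 0})) into the previously proved gradient formulas~(\ref{final state gradient u})--(\ref{F gradient w}). Your additional attention to the Jordan-block cases and to the excluded locus $\hat{u}=0$ (handled via Section~\ref{Sec:exact solution for hat u = 0}) is a faithful elaboration of the same argument rather than a different approach.
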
 

Note that while the formulae in the theorem complex, they are analytical and do not require solution of the differential equation which determines evolution of the system. They are based on expilcit exact analytical expressions for matrix eigenvalues and eigenvectors and thus they give an exact analytical solution for any gradient-based optimization for an open two-level quantum system driven by coherent and incoherent controls.

\begin{remark}
    Derivatives of the matrix exponentials~(\ref{gradient_exp_u_diag})~and~(\ref{gradient_exp_w_diag}) can be calculated by differentiating~(\ref{matrix_exponential_diagonalized}). However, it can be done only at points where $\Delta\ne 0$ and the roots are differentiable. 
\end{remark}

\section{GRAPE algorithm for coherent and incoherent control: Numerical simulations}\label{Sec7}

\subsection{Linear in time gradient computation}
$O(M)$ time. For example, consider formula~(\ref{final state gradient u}) can be rewritten as follows:
$$\pd{\r_T}{u_j}  = R_j \left( \pd{}{u_j}\left (e^{A_j \Delta t_j} \right) \r^{j-1} + \pd {\mathbf{g}_j}{u_j}\right),$$
where we introduced the following sequence of matrices $R_j$:
\begin{align*}
    R_j &= e^{A_M \Delta t_M} \dots e^{A_{j + 1} \Delta t_{j + 1}} = \displaystyle \prod _{k = j + 1} ^M e^{A_k \Delta t_k},\quad j = 0,1,\dotsc,M\\
    R_M &= \I, \\
    R_{M-1} &= e^{A_M \Delta t_M}, \\
    &\;\;\vdots \\
    R_0 &= e^{A_M \Delta t_M} \dotsc e^{A_1 \Delta t_1}.
\end{align*}
Recurrent formula which allows to compute every element in M steps:
$$
R_{j-1} = R_j e^{A_j \Delta t_j},\quad j = 1,\dotsc,M.
$$
Intermediate states $\r^j$ also can be computed using recurrent equation (\ref{recurrent eq r}). Eventually,  derivatives $\smallpd{e^{A_j \Delta t_j}}{[u_j,w_j]}$ and $\smallpd {\mathbf{g}_j}{[u_j,w_j]}$ are given by (\ref{g gradient u}),~(\ref{F gradient u}),~(\ref{g gradient w}),~(\ref{F gradient w}). 
Thus, derivatives with respect to coherent and incoherent control can be computed in linear time.

\subsection{Results of numerical simulation}
In order to test the obtained expressions for gradient computation we consider the following numerical optimization problem. Let $\rho_{\textrm{target}}$ be a particular target state of the system (\ref{system_qubit}) and $\rho_T^{u,n}$ be final state of the system driven by controls $u$ and $n$. Consider the problem of finding controls $u$ and $n$ that steer the system to $\rho_\textrm{target}$, i.e. such that
$$\rho_T^{u,n} = \rho_\textrm{target}.$$

This problem can be formulated as minimization of the objective functional~(\ref{functional_distance}):
\begin{equation}
    \F[u,n] = \J_2(\rho_T^{u,n}) = \|\rho_T^{u,n} - \rho_\textrm{target}\|^2 \to \min_{u,n}.   
    \label{optimization_problem}
\end{equation}  
We consider the norm induced by Hilbert-Schmidt scalar product, in this case it coincides with the squared norm of vector $\frac{1}{\sqrt{2}}(\mathbf{r}_T^{u,n} - \mathbf{r}_\textrm{target})$:
\begin{equation}
    \|\rho_T^{u,n} - \rho_\textrm{target}\|^2 = \mathrm{Tr}(\rho_T^{u,n} - \rho_\textrm{target})^2 =\frac{1}{2}|\mathbf{r}_T^{u,n} - \mathbf{r}_\textrm{target}|^2 \to \min_{u,n}.
    \label{optimization_problem_bloch}
\end{equation}
If and only if the system can be steered from the state~$\mathbf{r}_0$ to the state~$\mathbf{r}_\textrm{target}$ under considered system parameters, the minimum should equal zero: $J(\rho_T^{u,n}) = 0$, i.e. $\mathbf{r}^{u,n}_T = \mathbf{r}_\textrm{target}.$

In order to solve the considered optimization problem numerically, we implement ordinary gradient descent method. This is iterative algorithm for finding minimum (or maximum) of the functional. Generally, this method makes discrete steps along gradient of the functional, therefore it needs computation of the gradient in each iteration. In some cases one can compute gradient approximately by finding finite differences for each component, however our obtained expressions for gradient allows us to compute exact value of gradient in any point of control space. In the context of considered optimization problem~(\ref{optimization_problem_bloch}) gradient~(\ref{functional_gradient_bloch}) with respect to control~$u$~(\ref{PC qubit control: u}) and $w$~($n = w^2$)~(\ref{PC qubit control: w}) equals
\begin{equation}
    \dfrac{\partial \F}{\partial [u_j, w_j]} = (\mathbf{r}_T^{u,w} - \mathbf{r}_\textrm{target}) \pd{\mathbf{r}_T^{u,w}}{[u_j, w_j]}, \quad j = 1,\dots,M;
    \label{functional_gradient_bloch_transition}
\end{equation}
where derivatives w.r.t. each of $2M$~components of control $\pd{\mathbf{r}_T^{u,w}}{[u_j, w_j]}$ are given by expressions~(\ref{final state gradient u})~and~(\ref{final state gradient w}) or their diagonalized versions~(\ref{final state gradient u diagonalized})~and~(\ref{final state gradient w diagonalized}). For numerical simulations we used expressions without diagonalization, i.e. corresponding matrix exponentials were found numerically.

The gradient descent has the following iterative formula:
\begin{equation}
	(u^{(k + 1)}, w^{(k + 1)}) = (u^{(k)}, w^{(k)}) - h_k \mathrm{grad}_{u,w} \F(u^{(k)}, {w^{(k)}}^2), \qquad k = 0, 1, \dots;
	\label{GRAPE_gradient_descent}
\end{equation}
where $u^{(k)}$ and $w^{(k)}$ are control on $k$-th iteration, $h_k$ are the values of iterations steps,  $\mathrm{grad}_{u,v} \F(u, w)$ has $2M$ components which should be calculated by~(\ref{functional_gradient_bloch_transition}). Iterations of gradient descent method continues until the following stop criterion is satisfied:
\begin{equation}
    \F^{(k)} = \F[u^{(k)}, n^{(k)}] = \|\rho_T^{u,n} - \rho_\textrm{target}\|^2 < \varepsilon_1,
    \label{stop_criterion_functional}
\end{equation}
where $\varepsilon_1 > 0$ --- small parameter of accuracy, thus we find control that steers the system to the state~$\rho_T$ which differ from~$\rho_\textrm{target}$ by less than~$\varepsilon_1$ in the sense of Hilbert-Schmidt distance. Moreover, for cases when the target state is unreachable we use standart stop criterion
\begin{equation}
    \left|\mathrm{grad}_{u,w} \F\left[u^{(k)}, {w^{(k)}}^2\right]\right| < \varepsilon_2,
    \label{stop_criterion_gradient}
\end{equation}
that corresponds to the position of local extremum, i.e. where gradient equals zero (first order necessary optimality condition).
\begin{figure}
    \centering
    \includegraphics[width = \linewidth]{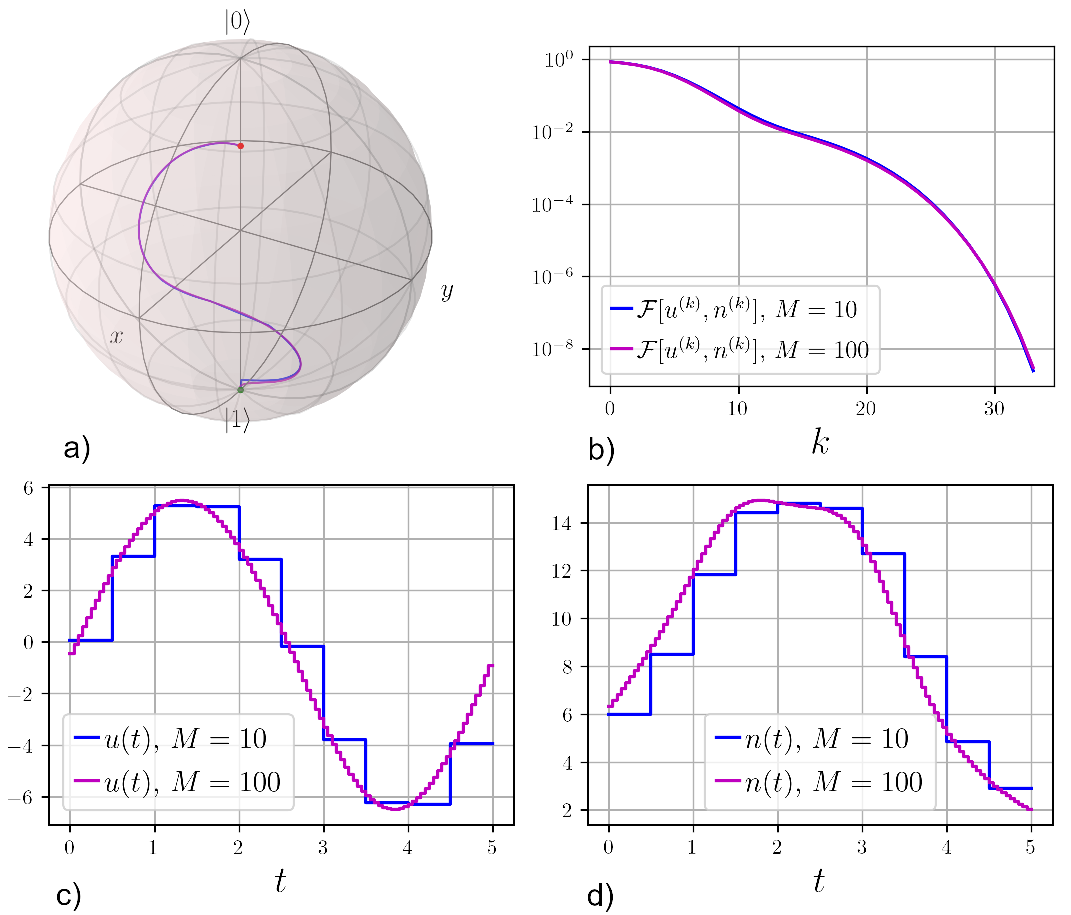}
    \caption{The initial and the target states $\rho_0={\rm diag}(0,1)$ and $\rho_{\rm tagret}={\rm diag}(3/4,1/4)$, shown on the subplot (a) by green and red dots, respectively. This target state is reachable from the initial state according to~\cite{Lokutsievskiy_2021}. Blue and purple lines show optimization results for $M=10$ and $M=100$, respectively. Subplot (a) shows optimal trajectories in the Bloch ball; they almost coincide. Subplot (b) shows fidelity vs number of iterations; the convergence is fast, infidelities $2.5\times10^{-9}$ for $M = 10$ and $3.1 \times 10^{-9}$ for $M = 100$ are obtained after 34 iterations for both cases. Subplots (c) and (d) show the obtained after optimization coherent and incoherent controls, respectively. Both these controls are non-zero.}
    \label{FigOptimization1}
\end{figure}

\begin{figure}
    \centering
    \includegraphics[width = \linewidth]{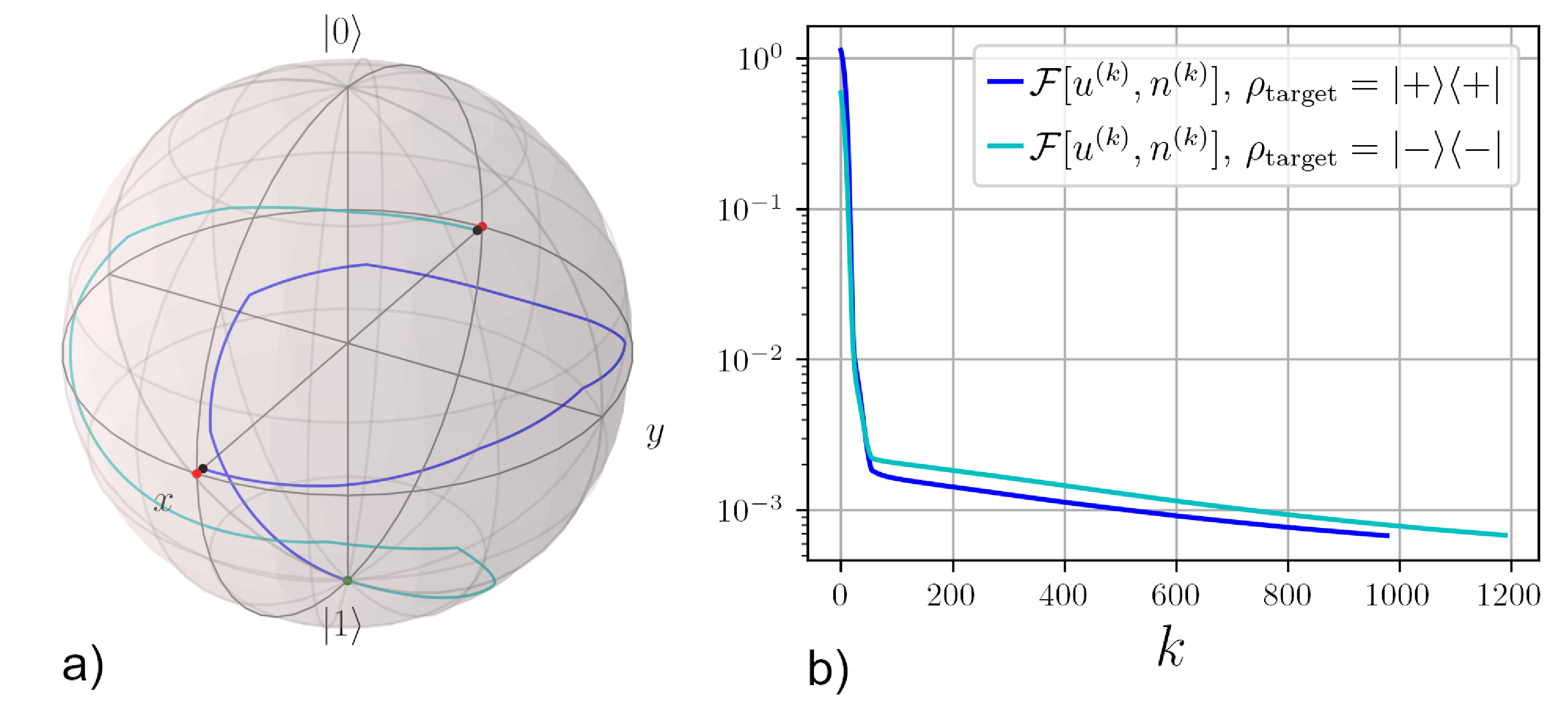}
    \caption{Two trajectories with target states in the two most unreachable points shown by red dots.
    The initial state $\rho_0=|0\rangle\langle 0|$ is shown on the subplot (a) by green dot. The two target states $\rho_{\rm tagret}=|\pm\rangle\langle\pm|$, where $|\pm\rangle=(|0\rangle\pm|1\rangle)/\sqrt{2}$, are shown on the subplot (a) by two red dots. These target states are unreachable from the initial state according to the result of~\cite{Lokutsievskiy_2021}. Numerical results agree with this finding, as the optimization algorithm for these target states does not converge to low fidelity values. Black dots show closest points to the target states obtained after the optimization. Minimal infidelity value~$\F = 6.7\times10^{-4}$ is obtained after $978$ iterations for $|+\rangle$ target state and $\F = 6.8\times10^{-4}$ after $1190$ iterations for $|-\rangle$ target state.}
    \label{FigOptimization2}
\end{figure}

Numerical simulation are made with the parameters relative scale of which is chosen in accordance with physical situations, so that transition frequency $\omega = 1$, strength of interaction with the control field $\mu = 0.1$, strength of interaction with the environment $\gamma = 0.01$. We used regular partition of time segment~$[0,T]$ into $M$~intervals, we considered time of evolution $T = 5$ and number of intervals $M = 10$ and $M = 100$. 

As initial guess we chose coherent control  $u^{(0)}_j = \sin{(2\pi t_{j-1} / T)}$ and incoherent control $w^{(0)}_j = \exp{(-4 (t_{j-1}/T - 1/2)^2)}$, $j = 1, \dots, M$, which are similar to the form of physical representation of coherent and incoherent control. For stop criterion small parameters~(\ref{stop_criterion_functional})~and~(\ref{stop_criterion_gradient}) we set accuracy~$\varepsilon_1 = 10^{-4}$ (first criterion) and $\varepsilon_2 = 5 \cdot 10^{-3}$ (second criterion).
We consider adaptive iteration steps~$h_k$ when value of step depends on if the value of the functional decrease or not, so that we get strictly decreasing sequence of the functional values. Literally, we take some initial step $h_0$ and for each iteration if value of the functional decreases: $\F^{(k + 1)} < \F^{(k)}$, than move on to next iteration and increase value of the step: $h_{k+1} = \alpha h_k$, $\alpha \geq 1$; else value of the functional increases, so we repeat iteration with reduced step: $h_{k}:= \beta h_{k}$, $0 < \beta < 1$ until value of step is small enough to successful decrease of the functional value. We chose the values of parameters as $h_0 = 10,\,100$ for example on Fig.\ref{FigOptimization1} and $h_0 = 1$ for example on Fig.\ref{FigOptimization2}, $\alpha = 1.1$ and~$\beta = 0.5$. We present several examples with different initial and target states.

First, as initial and target states we choose diagonal states
\[
\rho_0=|1\rangle\langle 1|=
\begin{pmatrix}
0 & 0\\ 0 & 1
\end{pmatrix},\qquad 
\rho_{\rm tagret}=
\begin{pmatrix}
3/4 & 0\\ 0 & 1/4
\end{pmatrix},
\]
corresponding Bloch vectors are $\r_0 = (0, 0, -1)$ and $\r_\textrm{target} = (0, 0, 0.5)$. 
This target state is reachable from the initial state, as was proven in~\cite{Lokutsievskiy_2021}. The numerical optimization confirms this result. The corresponding points in the Bloch ball are shown on Fig.~\ref{FigOptimization1} by green and red markers, respectively. Optimal evolutions for $M=10$ and $M=100$ segments of the total time interval $[0,T]$ are shown on the subplot (a) by blue and purple colours, respectively. The optimal trajectories almost coincide. The corresponding fidelity values vs the number of iteration steps are plotted on subplot (b). Coherent and incoherent controls found as the result of the optimization are plotted on subplots (c) and (d), correspondingly. The convergence is fast, infidelities $2.5\times10^{-9}$ for $M = 10$ and $3.1 \times 10^{-9}$ for $M = 100$ are obtained after 34 iterations for both cases.

In the second example, the initial state is the same  $\rho_0=|0\rangle\langle 0|$, but the target states are $\rho_{\rm tagret}=|\pm\rangle\langle\pm|$, where $|\pm\rangle=(|0\rangle\pm|1\rangle)/\sqrt{2}$. According to~\cite{Lokutsievskiy_2021}, these target states are unreachable from the initial state. The optimization results are shown on figure~\ref{FigOptimization2}. Numerical results agree with this finding, since the optimization algorithm for these target states does not converge to low infidelity values. The Bloch vectors of the target states are~$\r_{\rm target} = (\pm1, 0, 0)$ and located on $x$-axis. The initial and two target states are shown on the subplot (a) by green dot and two red dots, respectively. Black dots show closest points to the target states obtained after the optimization. The algorithm is not able to find optimal controls; minimal infidelity value~$\F = 6.7\times 10^{-4}$ is obtained after $978$ iterations for the $|+\rangle$ target state and $\F = 6.8\times 10^{-4}$ after $1190$ iterations for the $|-\rangle$ target state. This infedility values mean that the distance between the final state $\r_T^f$ and the target state $\r_{\rm target}$ is of order $10^{-2}$ that agrees with the theoretical result in~\cite{Lokutsievskiy_2021}. According to this result the minimal value of the distance between the final state and the target state is of order $\gamma/\omega = 10^{-2}$.

One can see that on figure~\ref{FigOptimization2}b there is an inflection point. At the iteration $k = 55$ the algorithm attains the functional value $\F = 1.8\times10^{-3}$ for the $|+\rangle$ target state and $\F = 2.2\times10^{-3}$ for the $|-\rangle$ target state and then convergence speed sharply decreases. This can be explained by the fact that the step value $h_k$ firstly increases with iterations until it reaches critical value when the corresponding gradient descent iteration does not reduce the functional value, and after the step value does not increase but fluctuates. This step values behavior affects speed of convergence.

All computation were performed via NumPy Python library which enables fast matrix computations, matrix exponentials, in cases when they were not diagonalized, were computed numerically using function \texttt{scipy.linalg.expm} of SciPy library. The integrals~(\ref{F gradient u})~and~(\ref{F gradient w}) are computed numerically by standart trapezoidal rule.

\section{Conclusions}\label{Sec8}
The GRAPE method is widely used for optimization in quantum control and has been applied to coherently controlled closed and open quantum systems. In this work, we adopt GRAPE method for optimizing objective functionals for open quantum  systems driven by both coherent and incoherent controls; here incoherent control represents spectral density of the tailored environment which acts on the system as control. Incoherent control by its physical meaning is non-negative and hence is bounded from below. For this problem we compute gradient of the objective for a general case of $N$-level open system and for various objectives. The gradient is computed for piecewise constant class of control. The case of a single qubit is considered in details and solved analytically. For this case, an explicit analytical expression for evolution and gradient is obtained via diagonalization of the $3\times 3$ matrix determining the controlled system dynamics in the Bloch representation. The diagonalization is obtained by solving a cubic equation via the Cardano's method. The efficiency of the algorithm is demonstrated through numerical simulations for the state-to-state transition problem and its complexity is estimated. The obtained analytical expression for the gradient can be used in any optimization methods which are based on gradient, including GRAPE, BFGS, etc.

\vspace{6pt} 

\section*{Acknowledgements}
We thank Boris O. Volkov for  a useful discussion of the robustness issue. This work was funded by the Ministry of Science and Higher Education project \textnumero~075-15-2020-788.

\end{document}